\def\e{\mathrm{e}}
\def\ve{\varepsilon}
\def\la{\lambda}
\def\me{\mathsf{e}}
\def\mv{\mathsf{v}}
\def\mE{\mathsf{E}}
\def\mV{\mathsf{V}}
\def\ea{\EuFrak{a}}
\def\Ker{\mathrm{Ker}}
\def\ran{\mathrm{ran}}
\def\ve{\varepsilon}
\def\diag{\mathrm{diag}}
\def\real{\mathbb{R}}
\def\nat{\mathbb{N}}
\def\mcH{\mathcal{H}}
\def\mcY{\mathcal{Y}}
\def\dx{\, dx}
\def\dt{\, dt}
\def\ds{\, ds}
\DeclareMathOperator{\Tr}{Tr}
\DeclareMathOperator{\HS}{HS}
\newtheorem{theorem}{Theorem}
\newtheorem{proposition}[theorem]{Proposition}
\newtheorem{corollary}[theorem]{Corollary}
\newtheorem{definition}[theorem]{Definition}
\newtheorem{remark}[theorem]{Remark}
\newtheorem{example}[theorem]{Example}
\numberwithin{equation}{section} \numberwithin{theorem}{section}
\begin{document}

\title{On the parabolic Cauchy problem for quantum graphs with vertex noise}

\author{M.~Kov\'acs}
\address{Faculty of Information Technology and Bionics\\P\'azm\'any P\'eter Catholic University,
Budapest, Hungary and
Department of Differential Equations, Faculty of Natural Sciences, Budapest University of Technology and Economics,
Budapest, Hungary and
Chalmers University of Technology and University of Gothenburg,
Gothenburg, Sweden}
\email{mihaly@chalmers.se}
\thanks{}

\author{E.~Sikolya}
\address{Department of Applied Analysis and Computational Mathematics\\
E\"otv\"os Lor\'and University\\
Budapest, Hungary\\
Alfr\'ed R\'enyi Institute of Mathematics,\\ 
Budapest, Hungary}
\email{eszter.sikolya@ttk.elte.hu}

\date{\today}

\subjclass[2010]{Primary: 81Q35, 60H15, 35R60, Secondary: 35R02, 47D06}
\keywords{Quantum graph, Hamiltonian operator, white-noise vertex conditions}

\begin{abstract}
We investigate the parabolic Cauchy problem associated with quantum graphs including Lipschitz or polynomial type nonlinearities and additive Gaussian noise perturbed vertex conditions. The vertex conditions are the standard continuity and Kirchhoff assumptions in each vertex. In the case when only Kirchhoff conditions are perturbed, we can prove existence and uniqueness of a mild solution with continuous paths in the standard state space $\mcH$ of square integrable functions on the edges. We also show that the solution is Markov and Feller. Furthermore, assuming that the vertex values of the normalized eigenfunctions of the self-adjoint operator governing the problem are uniformly bounded, we show that the mild solution has continuous paths in the fractional domain space associated with the Hamiltonian operator, $\mcH_{\alpha}$ for $\alpha<\frac{1}{4}$. This is the case when the Hamiltonian operator is the standard Laplacian perturbed by a potential. We also show that if noise is present in both type of vertex conditions, then the problem admits a mild solution with continuous paths in the fractional domain space $\mcH_{\alpha}$ with $\alpha<-\frac{1}{4}$ only. These regularity results are the quantum graph analogues obtained by da Prato and Zabczyk \cite{DPZ93} in case of a single interval and classical boundary Dirichlet or Neumann noise.
\end{abstract}

\maketitle

\pagestyle{plain}

\section{Introduction}

We consider a so-called quantum graph; that is, is a metric graph $G$, equipped with a diffusion operator on each edge and certain vertex conditions. Our terminology follows \cite[Chap.~1]{BeKu} (see also \cite{Mu14}), we list here only the most important concepts. The graph $G$ consists of a finite set of vertices $\mV = \{\mv\}$ and a finite set $\mE = \{\me\}$ of edges connecting the vertices. We denote by $m=|\mE|$ the number of edges and by  $n=|\mV|$ the number of vertices. In general, a metric graph is assumed to have directed edges; that is edges having an origin and a terminal vertex. In our case, dealing with self-adjoint operators, we can just consider undirected edges. Each edge is assigned a positive length $\ell_{\me}\in (0,+\infty)$, and we denote by $x\in [0,\ell_{\me}]$ a coordinate of $G$. We assume that $G$ is simple; that is, there are no multiple edges connecting two vertices, and there are no loops at any of the vertices in $G$. 

The metric graph structure enables one to speak about functions $u$ on $G$, defined along the edges such that for any coordinate $x$, the function takes its value $u(x)$. If we emphasize that $x$ is taken from the edge $\me$, we write $u_{\me}(x)$. Thus, a function $u$ on $G$ can be regarded as a vector of functions that are defined on the edges, therefore we will also write
\[u=\left(u_{\me}\right)_{\me\in \mE},\]
and consider it as an element of a product function space.

To write down the vertex conditions in the form of equations, for a given function $u$ on $G$ and for each $\mv\in\mV$, we introduce the following notation. For any $\mv\in\mV$, we denote by $\mE_{\mv}$ the set of edges incident to the vertex $\mv$, and by $d_{\mv}=|\mE_{\mv}|$ the degree of $\mv$. Let $u_{\me}(\mv)$ denote the value of $u$ in $\mv$ along the edge $\me$ in the case $\me\in \mE_{\mv}$. Let $\mE_{\mv}=\{\me_1,\dots ,\me_{d_{\mv}}\}$, and define
\begin{equation}\label{eq:Fv}
U(\mv)=\left(u_{\me}(\mv)\right)_{\me\in \mE_{\mv}}=\begin{pmatrix}
	u_{\me_1}(\mv)\\
	\vdots\\
	u_{\me_{d_{\mv}}}(\mv)
\end{pmatrix}\in \real^{d_{\mv}},
\end{equation}
the vector of the function values in the vertex $\mv$. 

Let $I_{\mv}$ be the bi-diagonal matrix
\begin{equation}\label{eq:Iv}
I_{\mv}=\begin{pmatrix}
	1 & -1 & & \\
	& \ddots & \ddots & \\
	& & 1 & -1 
	\end{pmatrix}\in\real^{(d_{\mv}-1)\times d_{\mv}}.
\end{equation}
It is easy to see that if we set
\begin{equation}\label{eq:contv}
I_{\mv}U(\mv)=0_{\real^{d_{\mv}-1}},
\end{equation} 
this means that all the function values coincide in $\mv$.  If it is satisfied for each vertex $\mv\in\mV$ for a function $u$ on $G$ which is continuous on each edge, including the one-sided continuity at the endpoints, then we call $u$ is continuous on $G$. 

Similarly, for a function $u$ on $G$ which is differentiable on each edge; that is. $u'_{\me}$ exists for each $\me\in\mE$ including the one-sided derivatives at the endpoints, we set
\begin{equation}\label{eq:Fvv}
U'(\mv)=\left(u'_{\me}(\mv)\right)_{\me\in\mE_{\mv}}=\begin{pmatrix}
	u'_{\me_1}(\mv)\\
	\vdots\\
	u'_{\me_{d_{\mv}}}(\mv)
\end{pmatrix}\in \real^{d_{\mv}},
\end{equation}
the vector of the function derivatives in the vertex $\mv$. We will assume throughout the paper that derivatives are taken in the directions away from the vertex $\mv$ (i.e.~into the edge), see \cite[Sec.~1.4.]{BeKu}.

First we aim to analyse the existence, uniqueness and  regularity of solutions of the problem written formally as
\begin{equation}\label{eq:stochnetnonlin}
\left\{\begin{aligned}
\dot{u}_{\me}(t,x) & =  (c_{\me} u_{\me}')'(t,x)-p_{\me}(x) u_{\me}(t,x)+F_{\me}(u_{\me}(t,x)), &x\in (0,\ell_{\me}),\; t\in(0,T],\; \me\in\mE,\;\; & (a)\\
0 & = I_{\mv}U(t,\mv),\;   &t\in(0,T],\; \mv\in\mV,\;\;& (b)\\
\dot{\beta}_{\mv}(t) & = C(\mv)^{\top}U'(t,\mv), & t\in(0,T],\; \mv\in\mV,\;\;& (c)\\
 u_{\me}(0,x) & =  u_{0,\me}(x), &x\in [0,\ell_{\me}],\; \me\in\mE,\;\;& (d)
\end{aligned}
\right.
\end{equation}
where $F_{\me}\colon\real\to\real$ is (globally) Lipschitz and $T>0$ is arbitrary but fixed. However, at the end of the paper we comment on the non-Lipschitz case as well.

Here $\dot{u}_{\me}$ and $u_{\me}'$ denote the time and space derivative, respectively, of $u_{\me}$. For each $\mv\in\mV$, $U(t,\mv)$ and $U'(t,\mv)$ denote the vector of the function values in $\mv$ introduced in \eqref{eq:Fv} and \eqref{eq:Fvv}, respectively, for the function on $G$ defined as
\[u(t,\cdot)=\left(u_{\me}(t,\cdot)\right)_{\me\in\mE}.\]
Analogously to \eqref{eq:Fv}, for each $\mv\in\mV$ and $\mE_{\mv}=\{\me_1,\dots ,\me_{d_{\mv}}\}$ the 1-row matrix in $(\ref{eq:stochnet}c)$ is defined by
\[C(\mv)^{\top}=\left(c_{\me_1}(\mv), \dots ,c_{\me_{d_{\mv}}}(\mv)\right)\in\real^{1\times d_{\mv}}.\]
Equations $(\ref{eq:stochnet}b)$ assume continuity of the function $u(t,\cdot)$ on the metric graph $G$, cf.~\eqref{eq:contv}. 

It is clear by definition, that $(\ref{eq:stochnet}b)$ consists of 
\begin{equation}\label{eq:conteqnumber}
\sum_{\mv\in\mV}(d_{\mv}-1)=2m-n
\end{equation} 
equations. At the same time, $(\ref{eq:stochnet}c)$ consists of $n$ equations. Hence, we have altogether $2m$ (boundary or vertex) conditions in the vertices. 

Let $(\Omega,\mathscr{F},\mathbb{P})$ is a complete probability space endowed with a right-continuous filtration $\mathbb{F}=(\mathscr{F}_t)_{t\in [0,T]}$.
Let the process 
\[(\beta(t))_{t\in [0,T]}=\left(\left(\beta_{\mv}(t)\right)_{t\in [0,T]}\right)_{\mv\in\mV},\] 
be an $\real^{n}$-valued Brownian motion (Wiener process) with covariance matrix 
\[\widetilde{Q}\in \real^{n\times n}\]
with respect to the filtration $\mathbb{F}$; that is,
  $(\beta(t))_{t\in [0,T]}$ is $(\mathscr{F}_t)_{t\in [0,T]}$-adapted and for all $t>s$, $\beta(t)-\beta(s)$ is independent of $\mathscr{F}_s$.

The functions $c_{\me}$ are (variable) diffusion coefficients or conductances, and we assume that
\[c_{\me}\in C[0,\ell_{\me}],\quad 0< c_0\leq c_{\me}(x),\quad\text{for almost all }x\in (0,\ell_{\me}),\text{ for all }  \me\in\mE.\]

The functions $p_{\me}$ are nonnegative, bounded functions, hence

\begin{equation}
0\leq p_{\me}\in L^{\infty}(0,\ell_{\me}),\quad \me\in\mE.
\end{equation}
In equation~$(\ref{eq:stochnet}d)$ we pose the initial conditions on the edges.

In Section \ref{sec:stochKircont} we omit the nonlinear drift term in $(\ref{eq:stochnetnonlin}a)$ and perturbe $(\ref{eq:stochnetnonlin}b)$ using a white-noise-term $\dot{\Phi}_{\mv}$ for all $\mv\in\mV$ as, in this case, it turns out that the solution process even to the linear equation is too irregular in space to include a Nemytskij-type nonlinearity. Hence, we investigate the problem
\begin{equation}\label{eq:stochnet}
\left\{\begin{aligned}
\dot{u}_{\me}(t,x) & =  (c_{\me} u_{\me}')'(t,x)-p_{\me}(x) u_{\me}(t,x), &x\in (0,\ell_{\me}),\; t\in(0,T],\; \me\in\mE,\;\; & (a)\\
\dot{\Phi}_{\mv}(t)& = I_{\mv}U(t,\mv),\;   &t\in(0,T],\; \mv\in\mV,\;\;& (b)\\
\dot{\beta}_{\mv}(t) & = C(\mv)^{\top}U'(t,\mv), & t\in(0,T],\; \mv\in\mV,\;\;& (c)\\
 u_{\me}(0,x) & =  u_{0,\me}(x), &x\in [0,\ell_{\me}],\; \me\in\mE.\;\;& (d)
\end{aligned}
\right.
\end{equation}
Here the process 
\[\left(\Theta(t)\right)_{t\in [0,T]}\coloneqq\begin{pmatrix}
	\left(\left(\Phi_{\mv}\right)_{t\in [0,T]}\right)_{\mv\in\mV}\\
	\left(\left(\beta_{\mv}\right)\right)_{t\in [0,T]})_{\mv\in\mV}
\end{pmatrix}\] 
is an $\real^{2m}$-valued Brownian motion (Wiener process) with covariance matrix 
\begin{equation}\label{eq:kovmtx}
Q\in \real^{2m\times 2m}
\end{equation} 
with respect to the filtration $\mathbb{F}$.
The main motivation to consider \eqref{eq:stochnetnonlin} and \eqref{eq:stochnet} is to generalize the classical concept of boundary noise, see, for example, \cite{BZ06, DPZ93} for Gaussian noise and \cite[Sec.~15.1]{PeZa} for L\'evy noise, to quantum graphs, see also \cite{BM10}.

The paper is organized as follows. In Section \ref{sec:determnetwork} we first investigate the linear deterministic version of \eqref{eq:stochnetnonlin} (that is, of \eqref{eq:stochnet}). We rewrite the system in the form of an abstract Cauchy problem governed by the operator $A$ and prove well-posedness by showing that $A$ generates a strongly continuous, analytic contraction semigroup on the Hilbert space $\mcH$ of $L^2$-functions on the edges, see Proposition \ref{prop:eaASttul}. The statement of Proposition \ref{prop:Diropletezik} is the existence and boundedness of the so-called Dirichlet-operator. Using this and results from \cite{DPZ93}, we can treat two versions of the stochastic problem. \\
In Section \ref{sec:stochKir} we consider the problem \eqref{eq:stochnetnonlin}. In Theorem \ref{theo:boundpertKir} we prove the existence and uniqueness of the mild solution for this problem with continuous paths in the Hilbert space $\mcH$. We also verify that this solution is Markov and Feller. Assuming that the vertex values of the eigenfunctions of $A$ are uniformly bounded, we can show in Theorem \ref{theo:eigenfctest} that the mild solution have continuous paths in the fractional domain space of order $\alpha<\frac{1}{4}$. This is the case, for example, if all diffusion coefficients and all edge lengths are constant and equal $1$, see Example \ref{ex:Laplace}. In Remark \ref{rem:polynnonlin} we treat the case of odd-degree polynomial type nonlinearities.\\
In Section \ref{sec:stochKircont} we briefly investigate the problem \eqref{eq:stochnet} where all the boundary conditions are perturbed by some noise, but due to the low regularity shown, only in the linear case. In Theorem \ref{theo:boundpertKircont} we prove that the stochastic convolution process has values in the fractional domain space of $A$ of order $\alpha< -\frac14$ only. Therefore, Nemytskij type nonlinearities, as in the previous section, cannot be considered as pont evaluation is not well-defined anymore. \\
The regularity results obtained in this paper are the analogues of the ones obtained by Da Prato and Zabczyk in \cite{DPZ93} in case of a single interval and classical boundary Dirichlet or Neumann noise.

\section{Heat equation on a network}\label{sec:determnetwork}

\subsection{The abstract Cauchy problem}\label{subsec:systeq}

We start with the deterministic problem
\begin{equation}\label{eq:netcp}
\left\{\begin{aligned}
\dot{u}_{\me}(t,x) & =  (c_{\me} u_{\me}')'(t,x)-p_{\me}(x) u_{\me}(t,x), &x\in (0,\ell_{\me}),\;t>0,\; \me\in\mE,\;\; & (a)\\
0 & = I_{\mv}U(t,\mv),\;   &t>0,\; \mv\in\mV,\;\;& (b)\\
0 & =  C(\mv)^{\top}U'(t,\mv), & t>0,\; \mv\in\mV,\;\;& (c)\\
 u_{\me}(0,x) & =  u_{0,\me}(x), &x\in [0,\ell_{\me}],\;\; \me\in\mE.\;\;& (d)
\end{aligned}
\right. ,
\end{equation}
where $0$ denotes the constant $0$ vector of dimension $d_{\mv}-1$ on the left-hand-side of $(\ref{eq:netcp}b)$.

We would like to rewrite our system in the form of an abstract Cauchy problem. First we consider the Hilbert space
\begin{equation}\label{eq:E2}
\mcH\coloneqq \prod_{\me\in\mE}L^2\left(0,\ell_{\me}\right)
\end{equation}
as the \emph{state space} of the edges, endowed with the natural inner product
\[\langle f,g\rangle_{\mcH}\coloneqq \sum_{\me\in\mE} \int_0^{\ell_{\me}} f_{\me}(x)g_{\me}(x) dx,\qquad
f=\left(f_{\me}\right)_{\me\in\mE},\;g=\left(g_{\me}\right)_{\me\in\mE}\in \mcH.\]
On $\mcH$ we define the operator
\begin{equation}\label{eq:opAmax}
A_{\max}\coloneqq \diag\left(\frac{d}{dx}\left(c_{\me} \frac{d}{dx}\right)-p_{\me} \right)_{\me\in\mE}
\end{equation}
with maximal domain
\begin{equation}\label{eq:domAmax}
D(A_{\max}) =\mcH^2\coloneqq \prod_{\me\in\mE} H^2(0,\ell_{\me}).
\end{equation}
We also introduce the \emph{boundary space} 
\begin{equation}\label{eq:Y}
\mcY\coloneqq \ell^2(\real^{2m})\cong \real^{2m}.
\end{equation}

Notice that for fixed $u\in D(A_{\max})$, the boundary (or vertex) conditions can be written as
\begin{equation}\label{eq:contKir}
0_{\real^{d_{\mv}-1}}=I_{\mv}U(\mv),\; 0=C(\mv)^{\top}U'(\mv),\quad \mv\in\mV,
\end{equation}
cf.~$(\ref{eq:netcp}b)$ and $(\ref{eq:netcp}c)$.

\begin{remark}
Define $A_{\mv}$ as the square matrix that arises from $I_{\mv}$ by inserting an additional row containing only $0$'s; that is,
\begin{equation}\label{eq:Av}
A_{\mv}=\begin{pmatrix}
	1 & -1 & & \\
	& \ddots & \ddots & \\
	& & 1 & -1 \\
	0 & \hdots & 0 & 0
	\end{pmatrix}\in\real^{d_{\mv}\times d_{\mv}}.
\end{equation}
Furthermore, let $B_{\mv}$ be the square matrix defined by
\begin{equation}\label{eq:Bv}
B_{\mv}=\begin{pmatrix}
	0 & \hdots &  0\\
	\vdots   &  & \vdots\\
	0 & \hdots &  0\\
	c_{\me_1}(\mv) & \hdots &  c_{\me_{d_{\mv}}}(\mv),
	\end{pmatrix}\in\real^{d_{\mv}\times d_{\mv}}.
\end{equation} 
It is straghtforward that for a fixed $\mv\in\mV$, equations \eqref{eq:contKir} have the form
\begin{equation}\label{eq:AvBv}
A_{\mv}U(\mv)+B_{\mv}U'(\mv)=0_{\real^{d_{\mv}}},
\end{equation}
where the $d_{\mv}\times (2d_{\mv})$ matrix $\left(A_{\mv},B_{\mv}\right)$ has maximal rank. Thus, our vertex conditions have the form as in \cite[Sec.~1.4.1]{BeKu}.
\end{remark}

We now define the feedback operator $B\colon D(A_{\max})\to \mcY$ by
\begin{equation}\label{eq:opB}
\begin{split}
D(B) & = D(A_{\max});\\
B u & =\begin{pmatrix}
	\left(I_{\mv}U(\mv)\right)_{\mv\in\mV}\\
	\left(C(\mv)^{\top}U'(\mv)\right)_{\mv\in\mV}
\end{pmatrix},
\end{split}
\end{equation}
where the first ''block'' of $Bu$ has $\sum_{\mv\in\mV}(d_{\mv}-1)=2m-n$ coordinates (see also \eqref{eq:conteqnumber}), while the second ''block'' has $n$ coordinates. Hence, $B$ maps indeed into $\real^{2m}$.

With these notations, we can finally rewrite \eqref{eq:netcp} in form of an abstract Cauchy problem. Define
\begin{align}\label{eq:amain}
A &\coloneqq A_{\max}\\
D(A)&\coloneqq\{u\in D(A_{\max})\colon Bu=0_\mcY\}.
\end{align}
Using this, \eqref{eq:netcp} becomes
\begin{equation}\label{eq:acp}
\left\{\begin{array}{rcll}
\dot{u}(t)&=& A u(t), &t> 0,\\
u(0)&=& u_0,
\end{array}
\right.
\end{equation}
with $u_0=(u_{0,1},\dots ,u_{0,m})^{\top}$.

\subsection{Well-posedness of the abstract Cauchy problem}

\begin{proposition}\label{prop:Diropletezik}
The operators $A_{\max}$ and $B$ satisfy the conditions in \cite[Section 15.1]{PeZa} and \cite[Section 1]{DPZ93}, that is
\begin{enumerate}
	\item the operator $A$ defined in \eqref{eq:amain} generates a $C_0$-semigroup on $\mcH$;
	\item for $\lambda\in\rho(A)$ there exists a bounded operator $D_{B,\lambda}\colon \mcY\to \mcH$ such that
	\begin{equation}\label{eq:Dirop}
	D_{B,\lambda}=\left(B\mid_{\Ker(\lambda-A_{\max})}\right)^{-1}.
	\end{equation}
\end{enumerate}
\end{proposition}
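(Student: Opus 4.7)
The plan is to treat the two claims separately. For (1), the natural route is to realize $A$ as the operator associated with a closed, densely defined, symmetric, nonnegative sesquilinear form on $\mcH$; for (2), the key observation is that $\Ker(\lambda - A_{\max})$ is finite-dimensional with dimension matching $\dim \mcY = 2m$, so that $B$ restricted to this kernel is a bijection by a dimension-count argument, and its inverse is automatically bounded.

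For (1), I would introduce the sesquilinear form
\begin{equation*}
\ea(u,v) = \sum_{\me\in\mE} \int_0^{\ell_{\me}} \bigl( c_{\me}(x) u_{\me}'(x) \overline{v_{\me}'(x)} + p_{\me}(x) u_{\me}(x) \overline{v_{\me}(x)} \bigr) \, dx
\end{equation*}
with form domain $V = \{u \in \prod_{\me\in\mE} H^1(0, \ell_{\me}) : I_{\mv} U(\mv) = 0_{\real^{d_{\mv}-1}} \text{ for all } \mv \in \mV\}$, namely the $H^1$-type space of functions that are continuous across each vertex. Since $c_{\me} \in C[0,\ell_{\me}]$ with $c_{\me} \geq c_0 > 0$ and $0 \leq p_{\me} \in L^{\infty}$, the form is closed (its domain is a closed subspace of $\prod_{\me} H^1(0,\ell_{\me})$ on which the graph norm is equivalent to the $H^1$-product norm), densely defined, symmetric and nonnegative. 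Integrating by parts on each edge, and using the outward-derivative convention for $u'(\mv)$ together with the common vertex value of the continuous test functions, identifies the associated operator as precisely the $A$ of \eqref{eq:amain}: the Kirchhoff-type conditions $C(\mv)^{\top} U'(\mv) = 0$ emerge as the natural boundary conditions, while continuity is built into $V$. The standard form-to-generator theorem then yields (1), which is the content of Proposition \ref{prop:eaASttul}.

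For (2), I first compute $\dim \Ker(\lambda - A_{\max})$. On each edge $\me$, the equation $\lambda u_{\me} - (c_{\me} u_{\me}')' + p_{\me} u_{\me} = 0$ is a linear second-order ODE with continuous coefficients (after dividing by $c_{\me} > 0$), so classical ODE theory gives a two-dimensional solution space, automatically contained in $H^2(0,\ell_{\me})$ on the bounded interval. Summing over the $m$ edges yields $\dim \Ker(\lambda - A_{\max}) = 2m = \dim \mcY$. For injectivity of $B$ on this kernel: if $u \in \Ker(\lambda - A_{\max})$ satisfies $B u = 0_{\mcY}$, then $u \in D(A)$ by definition and $(\lambda - A) u = 0$, which forces $u = 0$ since $\lambda \in \rho(A)$. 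Therefore $B|_{\Ker(\lambda - A_{\max})}$ is an injection between $2m$-dimensional spaces, hence a bijection, and its inverse $D_{B,\lambda}\colon \mcY \to \Ker(\lambda - A_{\max}) \subset \mcH$ is a linear map between finite-dimensional normed spaces, so it is bounded.

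The main subtlety lies in the identification of $D(A)$ via integration by parts in step (1): one must carefully track the boundary contributions at each vertex, as it is precisely the outward-derivative convention together with the single shared vertex value of continuous test functions that allows the cross-edge boundary terms to collapse into the compact expression $C(\mv)^{\top} U'(\mv)\, v(\mv)$. Once this identification is in place, both the semigroup generation and the dimension count leading to $D_{B,\lambda}$ are essentially routine.
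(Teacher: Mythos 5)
Your proof is correct, and for part (2) it takes a genuinely different route from the paper. Part (1) is essentially the paper's own argument: the authors cite the form-method generation result (which is spelled out in their Proposition \ref{prop:eaASttul} and the references therein), exactly as you propose. For part (2), however, the paper verifies Greiner's abstract hypotheses --- in particular the surjectivity of $B$ on all of $D(A_{\max})$, which costs them the entire Appendix \ref{sec:appDir} (an explicit ansatz $u_\me(x)=\alpha_\me \e^{-\gamma x}+\beta_\me \e^{-\gamma(\ell_\me-x)}$ plus a perturbation argument showing a $2m\times 2m$ matrix is invertible for large $\gamma$) --- and then invokes \cite[Lemma 1.2]{Gr87} to conclude that $B|_{\Ker(\lambda-A_{\max})}$ is bijective with bounded inverse. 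Your dimension count replaces all of this: $\dim\Ker(\lambda-A_{\max})=2m=\dim\mcY$ by edgewise ODE theory, injectivity of $B$ on the kernel follows from $\lambda\in\rho(A)$, and rank--nullity plus finite-dimensionality of the domain gives a bounded inverse; as a byproduct you recover $\mathrm{Im}\,B=\real^{2m}$ (the paper's Proposition \ref{prop:Bsurj}) for free. What the paper's route buys is robustness and context: Greiner's lemma does not need the kernel dimension to equal $\dim\mcY$ (only surjectivity of $B$ and $\rho(A)\neq\emptyset$), and it delivers the decomposition $D(A_{\max})=D(A)\oplus\Ker(\lambda-A_{\max})$ used implicitly in the boundary-noise framework of \cite{DPZ93,PeZa}. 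The one point you should flag in your argument is the claim that the $2m$-dimensional ODE solution space lies in the maximal domain $\mcH^2$: with $c_\me$ merely continuous (as the paper assumes) the natural solutions are those with $c_\me u_\me'\in H^1$ rather than $u_\me\in H^2$, and "dividing by $c_\me$" presupposes differentiability of $c_\me$; this is the same implicit smoothness the paper needs to make sense of $A_{\max}$ on $\mcH^2$ at all, so it is a shared caveat rather than a gap specific to your proof.
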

\begin{proof}$ $

1. Follows directly by \cite[Sec.~3.2 and Rem.~2.9]{EK19} or \cite[Rem.~3.6]{MR07}.

2. We will show that the assumptions (a)-(d) of \cite[(1.13)]{Gr87} are satisfied for $A_{\max}$ and $B$. Then, by \cite[Lemma 1.2]{Gr87} the existence of the bounded operator of $D_{B,\lambda}$ in \eqref{eq:Dirop} follows for every $\lambda\in\rho(A).$

The operator $\left(A_{\max},D(A_{\max})\right)$ is densely defined and closed on $\mcH$, hence assumption (a) is satisfied.
The boundary operator as a mapping
\[B\colon \left(D(A_{\max}),\|\cdot \|_{\mcH^{2}}\right)\to \mcY\]
is bounded by Sobolev embedding, which is assumption (b). The statement of (d) is exactly 1.~above. 

It remains only to prove (c); that is, $\mathrm{Im}\,B=\real^{2m}$, which is the assertion of Proposition \ref{prop:Bsurj}. 
\end{proof}

\begin{proposition}\label{prop:eaASttul}
The operator $(-A,D(A))$, where $A$ is defined in \eqref{eq:amain}, is the operator associated with the form
\begin{equation}\label{eq:ea}
\begin{split}
\ea(u,v)&=\sum_{\me\in\mE}\int_0^{\ell_{\me}}c_{\me}(x)u_{\me}'(x)v_{\me}'(x)\dx+\sum_{\me\in\mE}\int_0^{\ell_{\me}} p_{\me}(x)u_{\me}(x)v_{\me}(x)\dx,\\
D(\ea)&=\left\{u\in \mcH^1\colon I_{\mv}U(\mv)=0,\; \mv\in\mV\right\},
\end{split}
\end{equation}
where
\begin{equation}
\mcH^1\coloneqq \prod_{\me\in\mE}H^1(0,\ell_{\me}),
\end{equation}
in the following sense: 
\begin{equation}
\begin{split}
D(A)&=\left\{u\in D(\ea) \colon \exists\, h\in \mcH\text{ s.t. }\ea(u,v)=\langle h,v\rangle_{\mcH}\text{ for all }v\in D(\ea)\right\},\\
-A u&=h.
\end{split}
\end{equation}
The form $(\ea,D(\ea))$ is symmetric, densely defined, continuous, closed and accretive. The operator $(A,D(A))$ is densely defined, dissipative, sectorial and self-adjoint with $(0,+\infty)\subset\rho(A)$.
The strongly continuous semigroup $(S(t))_{t\geq 0}$ generated by $(A,D(A))$ is analytic, positive and contractive. 
\end{proposition}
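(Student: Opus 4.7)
The plan is to first verify the listed properties of the sesquilinear form $(\ea, D(\ea))$ directly from its definition and the assumptions on $c_\me$ and $p_\me$. Symmetry is clear from the integrands. Density of $D(\ea)$ in $\mcH$ holds because $D(\ea)$ contains $C_c^\infty(0,\ell_\me)$-functions on each edge extended by zero to the others. Continuity of $\ea$ on $\mcH^1 \times \mcH^1$ follows from $c_\me \in C[0,\ell_\me]$ and $p_\me \in L^\infty$. For closedness, the form-norm $\ea(u,u) + \|u\|_\mcH^2$ is equivalent to $\|u\|_{\mcH^1}^2$ thanks to the lower bound $c_\me \ge c_0 > 0$, and the vertex continuity conditions $I_\mv U(\mv)=0$ cut out a closed subspace of $\mcH^1$ since the point-evaluations at the edge endpoints are continuous on each $H^1(0,\ell_\me)$. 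Accretivity follows immediately from $c_\me, p_\me \ge 0$.

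Next I would identify the operator associated with the form with $(-A, D(A))$. Given $u \in D(A)$ and $v \in D(\ea)$, edge-by-edge integration by parts gives
\begin{equation*}
\ea(u,v) = -\sum_{\me\in\mE}\int_0^{\ell_\me}\bigl((c_\me u_\me')' - p_\me u_\me\bigr) v_\me\,dx + \sum_{\mv\in\mV}\sum_{\me\in\mE_\mv} c_\me(\mv)\, u_\me'(\mv)\, v_\me(\mv),
\end{equation*}
with derivatives taken outward from each vertex. Since $v \in D(\ea)$ is continuous on $G$, writing $v(\mv) := v_\me(\mv)$ the boundary sum collapses to $\sum_{\mv\in\mV} v(\mv)\, C(\mv)^\top U'(\mv)$, which vanishes by $Bu=0$. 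Hence $\ea(u,v) = \langle -Au, v\rangle_\mcH$. Conversely, for $u \in D(\ea)$ and $h \in \mcH$ satisfying $\ea(u,v) = \langle h, v\rangle_\mcH$ for every $v \in D(\ea)$, testing with $v$ compactly supported in the interior of a single edge yields the distributional identity $(c_\me u_\me')' - p_\me u_\me = -h_\me$, which together with $u \in \mcH^1$ and the uniform positivity of $c_\me$ forces $u \in \mcH^2 = D(A_{\max})$. Vertex continuity is built into $D(\ea)$, and testing against $v \in D(\ea)$ with prescribed vertex values (realized by the surjectivity of $B$ from Proposition \ref{prop:Diropletezik}) produces the Kirchhoff condition $C(\mv)^\top U'(\mv)=0$ at each $\mv$. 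Thus $Bu=0$ and $u\in D(A)$, so the two operators coincide.

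With this identification, standard form theory (e.g.\ Kato's representation theorem) immediately yields that $(-A, D(A))$ is densely defined, self-adjoint, $m$-accretive and sectorial, so $A$ is densely defined, dissipative, sectorial, self-adjoint, and $(0,+\infty)\subset \rho(A)$. The semigroup $(S(t))_{t\ge 0}$ it generates is then automatically analytic (sectoriality of $-A$) and contractive (accretivity of $-A$) on $\mcH$. For positivity I would invoke the first Beurling--Deny criterion: if $u\in D(\ea)$ is real then $u^\pm \in H^1$ edgewise and $u$ being continuous on $G$ forces $u^\pm$ to be continuous on $G$, so $u^\pm \in D(\ea)$; since $(u_\me^+)' (u_\me^-)' = 0$ almost everywhere and $u_\me^+ u_\me^- = 0$ on each edge, one obtains $\ea(u^+,u^-) = 0$, which yields positivity. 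The main obstacle I anticipate is the integration-by-parts identification, specifically the clean extraction of the Kirchhoff condition from the vertex boundary sum, which requires combining continuity of $v$ across each vertex with the surjectivity of $B$ to realize arbitrary prescribed vertex values of test functions.
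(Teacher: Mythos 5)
Your argument is correct in substance and, unlike the paper's proof of this proposition (which consists of a citation to \cite[Sec.~3]{MR07} and \cite[Prop.~2.3--2.6]{KS21}), it is self-contained: the form properties, Kato's representation theorem, the edgewise integration by parts identifying the associated operator with $-A$, and the first Beurling--Deny criterion for positivity are exactly the machinery underlying those references, so the route is the same in spirit but written out explicitly. The verification of closedness via the equivalence of the form norm with the $\mcH^1$-norm and the closedness of the vertex-continuity subspace, and the Beurling--Deny computation $\ea(u^+,u^-)=0$, are both correct.

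Two steps need repair, though neither is fatal. First, the surjectivity of $B$ (Proposition \ref{prop:Bsurj}) is the wrong tool for producing test functions $v\in D(\ea)$ with prescribed vertex values: $B$ prescribes the differences $I_{\mv}V(\mv)$ and the weighted derivative sums $C(\mv)^{\top}V'(\mv)$, and for $v\in D(\ea)$ the common values $v(\mv)$ are simply not controlled by $B$ at all. What you actually need is elementary: for a fixed $\mv_0$ take $v_{\me}$ to be a smooth cutoff equal to $1$ near the endpoint of $\me$ at $\mv_0$ for each $\me\in\mE_{\mv_0}$ and $0$ elsewhere; this $v$ lies in $D(\ea)$, satisfies $v(\mv_0)=1$ and $v(\mv)=0$ for $\mv\neq\mv_0$, and isolates the Kirchhoff sum at $\mv_0$ in the boundary term. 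Second, the claim that the distributional identity $(c_{\me}u_{\me}')'=p_{\me}u_{\me}-h_{\me}$ together with $c_{\me}\geq c_0>0$ forces $u\in\mcH^2$ is not literally correct for merely continuous $c_{\me}$: the equation yields $c_{\me}u_{\me}'\in H^1(0,\ell_{\me})$, and dividing by a continuous but non-Lipschitz $c_{\me}$ need not land back in $H^1$. This is really a defect of the paper's choice $D(A_{\max})=\mcH^2$ with $c_{\me}\in C[0,\ell_{\me}]$ rather than of your argument, but it should be acknowledged (either assume $c_{\me}$ Lipschitz, or replace the maximal domain by $\{u\in\mcH^1\colon c_{\me}u_{\me}'\in H^1(0,\ell_{\me})\}$). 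Finally, under the paper's convention that derivatives at a vertex point into the edge, your vertex boundary sum should carry the opposite sign; this is immaterial since the term vanishes in both directions of the identification.
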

\begin{proof}
All the statements follow by \cite[Sec.~3]{MR07} (see also \cite[Prop.~2.3--2.6]{KS21}).
\end{proof}

\section{Stochastic perturbation of the Kirchhoff--Neumann vertex conditions}\label{sec:stochKir}

In this section we consider the problem
\begin{equation}\label{eq:stochnetK}
\left\{\begin{aligned}
\dot{u}_{\me}(t,x) & =  (c_{\me} u_{\me}')'(t,x)-p_{\me}(x) u_{\me}(t,x)+F_{\me}(u_{\me}(t,x)), &x\in (0,\ell_{\me}),\; t\in(0,T],\; \me\in\mE,\;\; & (a)\\
0 & = I_{\mv}U(t,\mv),\;   &t\in(0,T],\; \mv\in\mV,\;\;& (b)\\
\dot{\beta}_{\mv}(t) & = C(\mv)^{\top}U'(t,\mv), & t\in(0,T],\; \mv\in\mV,\;\;& (c)\\
 u_{\me}(0,x) & =  u_{0,\me}(x), &x\in [0,\ell_{\me}],\; \me\in\mE,\;\;& (d)
\end{aligned}
\right.
\end{equation}
where $F_{\me}\colon\real\to\real$ is (globally) Lipschitz, see \eqref{eq:stochnetnonlin}.

To treat this problem, on $\mcH$ we define the maximal operator $(A_{K,\max},D(A_{K,\max}))$ in a slightly different way than it has been done in \eqref{eq:opAmax} and \eqref{eq:domAmax}. Namely, we put the continuity conditions in its domain (see also \cite{KS21}), that is, we set
\begin{align}\label{eq:opAKmax}
A_{K,\max}&\coloneqq \diag\left(\frac{d}{dx}\left(c_{\me} \frac{d}{dx}\right)-p_{\me} \right)_{\me\in\mE},\\
D(A_{K,\max})&\coloneqq \left\{u\in\mcH^2\colon I_{\mv}U(\mv)=0,\; \mv\in\mV \right\}.
\end{align}
Accordingly, we have to modify the boundary space as
\begin{equation}\label{eq:YK}
\mcY_{K}\coloneqq\ell^2(\real^n)\cong \real^n,
\end{equation}
where $n=|\mV|$ the number of vertices in $G$, cf.~\eqref{eq:Y}. The feedback operator becomes $B_K\colon D(A_{K,\max})\to \mcY_K$,

\begin{align}\label{eq:opBK}
D(B_K) & = D(A_{K,\max});\\
B_K u & \coloneqq \left(C(\mv)^{\top}U'(\mv)\right)_{\mv\in\mV},
\end{align}
see \eqref{eq:opB}.

Defining
\begin{align}\label{eq:aKmain}
A &\coloneqq A_{K,\max}\\
D(A)&\coloneqq\{u\in D(A_{K,\max})\colon B_Ku=0_{\mcY_K}\}
\end{align}
we obtain the same operator as in \eqref{eq:amain}.

Mimicking the proof of Proposition \ref{prop:Diropletezik} one obtains the following result.

\begin{proposition}\label{prop:DiropKirletezik}
The operators $A_{K,\max}$ and $B_K$ satisfy the conditions in \cite[Section 15.1]{PeZa} and \cite[Section 1]{DPZ93}, that is
\begin{enumerate}
	\item the operator $A$ defined in \eqref{eq:aKmain} -- that is, in \eqref{eq:amain} --	generates a $C_0$-semigroup on $\mcH$;
	\item for $\lambda>0$ there exists a bounded operator $D_{B_K,\lambda}\colon \mcY_K\to \mcH$ such that
	\begin{equation}\label{eq:DiropK}
	D_{B_K,\lambda}=\left(B_K\mid_{\Ker(\lambda-A_{K,\max})}\right)^{-1}.
	\end{equation}
\end{enumerate}
\end{proposition}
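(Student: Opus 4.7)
The plan is to replicate the proof strategy of Proposition~\ref{prop:Diropletezik} for the modified pair $(A_{K,\max},B_K)$: I will verify assumptions (a)--(d) of \cite[(1.13)]{Gr87} and then invoke \cite[Lemma~1.2]{Gr87} to obtain the bounded Dirichlet operator $D_{B_K,\lambda}$. Part~(1) is immediate because the restriction of $A_{K,\max}$ to $\{u\in D(A_{K,\max})\colon B_K u=0\}$ coincides by construction with the operator $A$ from \eqref{eq:amain}; Proposition~\ref{prop:eaASttul} then supplies not only the $C_0$-semigroup but also the inclusion $(0,+\infty)\subset\rho(A)$, so the Dirichlet operator will indeed exist for every $\lambda>0$ as required.

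For Greiner's conditions (a), (b), (d), the arguments mirror those in Proposition~\ref{prop:Diropletezik}. Condition~(d) is just part~(1). For (a), density of $D(A_{K,\max})$ in $\mcH$ is clear since it contains $\prod_{\me\in\mE}C_c^\infty(0,\ell_\me)$, and closedness of $A_{K,\max}$ follows because $D(A_{K,\max})$ is a closed subspace of $\mcH^2$ (cut out by the continuous trace functionals appearing in $I_\mv U(\mv)=0$) while the unrestricted maximal operator is already closed. Condition~(b) is a Sobolev embedding: the map $u\mapsto u'_\me(\mv)$ is bounded from $H^2(0,\ell_\me)$ to $\real$, so $B_K\colon(D(A_{K,\max}),\|\cdot\|_{\mcH^2})\to\mcY_K$ is bounded.

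The only genuinely new work is condition~(c), surjectivity of $B_K$ onto $\real^n$. This step is in fact easier than the corresponding Proposition~\ref{prop:Bsurj}, because the continuity conditions now live inside the domain and only $n=|\mV|$ target values remain to be hit. For each $\mv_0\in\mV$ I will construct a test function $u^{(\mv_0)}$ supported on a single edge $\me_0\in\mE_{\mv_0}$: identifying $\me_0$ with $[0,\ell_{\me_0}]$ so that $\mv_0$ corresponds to $0$, I choose a smooth $\varphi$ supported in $[0,\ell_{\me_0}/2]$ with $\varphi(0)=0$ and $\varphi'(0)=1/c_{\me_0}(\mv_0)>0$, and set $u^{(\mv_0)}_{\me_0}\coloneqq\varphi$, $u^{(\mv_0)}_\me\equiv 0$ for every $\me\neq\me_0$. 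Continuity at each vertex is automatic because the value of $u^{(\mv_0)}$ is $0$ on every incident edge, and since $\varphi$ vanishes near the far endpoint of $\me_0$, a direct computation gives $C(\mv)^\top(U^{(\mv_0)})'(\mv)=\delta_{\mv,\mv_0}$. The family $\{u^{(\mv_0)}\}_{\mv_0\in\mV}$ therefore maps onto the standard basis of $\real^n$, giving surjectivity.

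The main obstacle is thus only the careful bookkeeping in step~(c)---specifying $\varphi$ so that no spurious derivative contribution appears at the far endpoint of $\me_0$ and so that $u^{(\mv_0)}$ actually sits in $D(A_{K,\max})$ with the correct pattern of vertex derivatives---after which (a)--(d) are in place and \cite[Lemma~1.2]{Gr87} closes the argument exactly as in Proposition~\ref{prop:Diropletezik}.
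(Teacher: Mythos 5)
Your proposal is correct and follows the same route the paper intends: the paper's proof of Proposition~\ref{prop:DiropKirletezik} is literally the single sentence ``Mimicking the proof of Proposition~\ref{prop:Diropletezik} one obtains the following result,'' i.e.\ verify Greiner's conditions (a)--(d) for the pair $(A_{K,\max},B_K)$ and apply \cite[Lemma~1.2]{Gr87}, with $(0,+\infty)\subset\rho(A)$ from Proposition~\ref{prop:eaASttul} giving the range of admissible $\lambda$. The one place where you genuinely add content is condition (c), surjectivity of $B_K$. The paper's template would be to adapt (or simply invoke) the appendix result Proposition~\ref{prop:Bsurj}: indeed, given $z_K\in\real^n$, applying Proposition~\ref{prop:Bsurj} to $z=(0_{\real^{2m-n}},z_K)^{\top}$ produces $u\in\mcH^2$ with $I_{\mv}U(\mv)=0$ for all $\mv$ (so $u\in D(A_{K,\max})$) and $B_K u=z_K$, which settles (c) in one line at the cost of the exponential/block-matrix machinery of the appendix. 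Your direct construction --- a bump function on a single edge incident to $\mv_0$, vanishing at $\mv_0$ and near the far endpoint, with inward derivative $1/c_{\me_0}(\mv_0)$ at $\mv_0$ --- is more elementary and self-contained; it works precisely because the continuity conditions now sit inside $D(A_{K,\max})$ and a function that is identically zero at every vertex trivially satisfies them, so only the $n$ Kirchhoff values need to be hit. I checked the bookkeeping: your $u^{(\mv_0)}$ lies in $\mcH^2$, satisfies $I_{\mv}U(\mv)=0$ at every vertex, and yields $C(\mv)^{\top}(U^{(\mv_0)})'(\mv)=\delta_{\mv,\mv_0}$ with the paper's convention that derivatives point into the edge. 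The remaining verifications (density and closedness of $A_{K,\max}$, boundedness of $B_K$ via $H^2\hookrightarrow C^1$) are as in Proposition~\ref{prop:Diropletezik} and are handled adequately.
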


In order to define the so-called mild solution to  \eqref{eq:stochnetK} we first study the stochastic convolution process defined by
\begin{equation}\label{eq:ZKt}
Z_K(t)\coloneqq \int_0^t (\lambda-A) S(t-s)D_K\, d\beta (s), \quad t\in [0,T],
\end{equation}
where $\lambda>0$ is fixed, $D_K\coloneqq D_{B_K,\lambda}$ and \[\beta=\left(\beta_{\mv}\right)_{\mv\in\mV},\] 
is a $\real^{n}$-valued Brownian motion (Wiener process) with covariance matrix 
\[\widetilde{Q}\in \real^{n\times n}.\]
To this aim we first introduce the fractional domain spaces of the generator.
Since $A$ generates a contractive analytic semigroup, we can define its fractional powers for $\la>0$ and $\alpha\in (0,1)$. In particular, the fractional domain spaces
\begin{equation}\label{eq:Halphapos}
\mcH_{\alpha}\coloneqq D((\la-A)^{\alpha}),\quad \|u\|_{\alpha}\coloneqq \|(\la-A)^{\alpha}u\|,\quad u\in D((\la-A)^{\alpha})
\end{equation}
are Banach spaces. We fix $\mcH_0\coloneqq \mcH$.

For $\alpha\in(-1,0)$ we define the extrapolation spaces $\mcH_{\alpha}$ as the completion of $\mcH$ under the norms $\|u\|_{\alpha}\coloneqq \|(\la-A)^{\alpha}u\|$, $u\in \mcH$.

It is well-known (see e.g.~\cite[\textsection II.4--5.]{EN00}) that up to equivalent norms, these spaces are independent of the choice of $\la>0$.

\begin{remark}\label{rem:omega0}
Since the semigroup $(S(t))_{t\geq 0}$ is contractive, hence bounded, then by \cite[Prop.~3.1.7]{Haase06}
we can choose $\la=0$ in \eqref{eq:Halphapos}. That is,
\[\mcH_{\alpha}\cong D((-A)^{\alpha}),\quad \alpha\in [0,1),\]
when $D((-A)^\alpha)$ is equipped with the graph norm.
\end{remark}

\begin{theorem}\label{theo:ZKcont}
The stochastic convolution $Z_K(\cdot)$ given by \eqref{eq:ZKt} is well-defined, 
\begin{equation}
Z_K\in C\left([0,T],L^2(\Omega,\mcH)\right)
\end{equation}
and has an $\mcH$-valued continuous version.
\end{theorem}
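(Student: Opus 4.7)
My plan is to combine the analytic-semigroup smoothing established in Proposition \ref{prop:eaASttul} with the fractional-domain regularity of the Dirichlet map $D_K = D_{B_K,\lambda}$ supplied by Proposition \ref{prop:DiropKirletezik}, and then invoke the standard stochastic-convolution theory of \cite[Ch.~5]{PeZa} and \cite[Sec.~1]{DPZ93}. Since $\mcY_K\cong\real^n$ is finite-dimensional, Hilbert--Schmidt and operator norms of operators from $\mcY_K$ to $\mcH$ differ only by a factor depending on $n$, so the relevant integrability condition reduces to the deterministic bound
$$\int_0^T \bigl\|(\lambda-A)S(\tau)D_K\bigr\|^2 \, d\tau < +\infty,$$
with $\|\cdot\|$ the operator norm from $\mcY_K$ to $\mcH$.

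The decisive step is to show that $D_K$ takes values in $\mcH_\gamma$ for some $\gamma>\tfrac12$. By construction $u = D_K y$ lies in $D(A_{K,\max})\subset\mcH^2$, solves $\lambda u = A_{K,\max} u$ edge-wise, has the vertex-continuity condition built in, and carries the Kirchhoff-type flux datum $y$. Proposition \ref{prop:eaASttul} identifies $D(\ea)$ with the space of $\mcH^1$-functions satisfying the vertex-continuity condition, and by self-adjointness this is $\mcH_{1/2}$; complex interpolation between $\mcH_{1/2}$ and $\mcH_1 = D(A)$ then characterises $\mcH_\gamma$ for $\gamma\in(\tfrac12,\tfrac34)$ as those functions in $\prod_{\me\in\mE}H^{2\gamma}(0,\ell_\me)$ that satisfy the vertex-continuity condition only---the Kirchhoff derivative condition is a trace constraint that enters the characterisation only for $\gamma>\tfrac34$. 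Combined with the one-dimensional ODE analysis on each edge, this yields $\|(\lambda-A)^\gamma D_K y\|_\mcH \leq C_\gamma |y|_{\mcY_K}$ for every $\gamma<\tfrac34$.

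Fix such a $\gamma\in(\tfrac12,\tfrac34)$. Factorising
$$(\lambda-A)S(\tau)D_K = (\lambda-A)^{1-\gamma}S(\tau)\cdot(\lambda-A)^\gamma D_K$$
and invoking the analytic-semigroup estimate $\|(\lambda-A)^{1-\gamma}S(\tau)\|\leq C\tau^{-(1-\gamma)}$, the integrand is bounded by $C'\tau^{-2(1-\gamma)}$, which is integrable on $(0,T)$ because $2(1-\gamma)<1$. This proves $Z_K(t)\in L^2(\Omega,\mcH)$ and, via the Wiener isometry together with dominated convergence, that $Z_K\in C([0,T],L^2(\Omega,\mcH))$. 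For pathwise continuity I would apply the Da Prato--Kwapie\'n--Zabczyk factorisation method: fix $\alpha\in(0,\gamma-\tfrac12)$, define
$$Y_\alpha(s)\coloneqq\int_0^s (s-r)^{-\alpha}(\lambda-A)S(s-r)D_K\, d\beta(r), \qquad Z_K(t)=\tfrac{\sin(\pi\alpha)}{\pi}\int_0^t (t-s)^{\alpha-1}S(t-s)Y_\alpha(s)\, ds,$$
and observe that the same fractional-power bound now with weight $(s-r)^{-2\alpha}$ is integrable because $2\alpha + 2(1-\gamma)<1$. Gaussian moment equivalence then places $Y_\alpha$ in $L^p(0,T;\mcH)$ almost surely for every $p$; the deterministic convolution with $(t-s)^{\alpha-1}S(t-s)$ maps $L^p(0,T;\mcH)$ into $C([0,T];\mcH)$ as soon as $p\alpha>1$, producing the desired $\mcH$-valued continuous version.

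The single real obstacle is the fractional-regularity claim $D_K(\mcY_K)\subset\mcH_\gamma$ for some $\gamma>\tfrac12$; this is the quantum-graph analogue of the classical $H^{3/2-\varepsilon}$-regularity of the Neumann extension operator used in \cite{DPZ93}, and is the precise reason why the Kirchhoff--Neumann noise is mild enough to admit an $\mcH$-valued solution. Everything else is routine bookkeeping once this regularity is in hand.
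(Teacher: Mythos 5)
Your proposal is correct in outline but routes the argument differently from the paper, and the difference is instructive. The paper never needs your key claim that $D_K$ maps $\mcY_K$ into $\mcH_\gamma$ for some $\gamma>\tfrac12$: it only uses that $D_K y\in D(A_{K,\max})\subset D(\ea)$, and that $D(\ea)\cong D((\lambda-A)^{\frac12})=\mcH_{1/2}$ by the square-root property for the self-adjoint form operator (\cite[Prop.~5.5.1]{Ar04}) --- a fact that is immediate here, since $D_K y$ lies in $\mcH^2$ and satisfies the vertex-continuity condition. The exponent $\tfrac12$ suffices because the paper replaces your pointwise bound $\|(\lambda-A)^{1-\gamma}S(\tau)\|\lesssim\tau^{-(1-\gamma)}$ (which forces $\gamma>\tfrac12$ for square integrability) by the time-integrated energy/Parseval estimate $\int_0^T\|(\lambda-A)S(t)v\|^2\,dt\le c_T\|(\lambda-A)^{\frac12}v\|^2$, which holds at the endpoint; path continuity is then obtained from the maximal inequality for stochastic convolutions with contraction semigroups (\cite[Rem.~1]{HauSei01}) rather than by factorisation. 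Your route is viable --- the claim $D_K(\mcY_K)\subset\mcH_\gamma$ for all $\gamma<\tfrac34$ is true and is indeed the graph analogue of the $H^{3/2-\ve}$ regularity of the Neumann map, and the extra margin $\gamma>\tfrac12$ is exactly what makes the Da Prato--Kwapie\'n--Zabczyk factorisation close --- but the interpolation characterisation of $\mcH_\gamma$ for $\gamma\in(\tfrac12,\tfrac34)$ (that the Kirchhoff trace condition is not seen below $\tfrac34$) is the hardest step of your argument and you only sketch it; it is a Grisvard-type result that would need the same kind of justification the paper supplies in the proof of Theorem \ref{theo:boundpertKircont} via \eqref{eq:DA12interpol}--\eqref{eq:Hrhoizom}. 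In short: your proof buys a self-contained pathwise-continuity argument at the price of a nontrivial interpolation lemma, while the paper's energy argument shows that lemma is unnecessary for this theorem.
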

\begin{proof}
First we prove that the stochastic process $Z_K(\cdot)$ in \eqref{eq:ZKt} is well-defined in $\mcH$, that is,
\begin{equation}\label{eq:ZKtcontpf00}
\int_0^T\left\|(\lambda-A) S(t)D_K\widetilde{Q}^\frac12\right\|^2_{\HS(\mcY_K,\mcH)}\dt<+\infty,
\end{equation}
where $\HS$ denotes the Hilbert--Schmidt-norm between the appropriate spaces. As we have
\[
\left\|(\lambda-A) S(t)D_K\widetilde{Q}^\frac12\right\|^2_{\HS(\mcY_K,\mcH)}\leq \left\|(\lambda-A) S(t)D_K\right\|^2_{\HS(\mcY_K,\mcH)} \cdot\Tr(\widetilde{Q}),
\]
it is enough to verify that
\begin{equation}\label{eq:ZKtcontpf}
\int_0^T\left\|(\lambda-A) S(t)D_K\right\|^2_{\HS(\mcY_K,\mcH)}\dt<+\infty.
\end{equation}
Using a standard energy argument (or, alternatively, Parseval's formula), it follows that
\begin{equation}\label{eq:DKbecsles}
\int_0^T\left\|(\lambda-A) S(t)D_K\right\|^2_{\HS(\mcY_K,\mcH)}\dt\leq c_T\cdot \left\|(\lambda-A)^{\frac{1}{2}}D_K\right\|^2_{\HS(\mcY_K,\mcH)}.
\end{equation}
Since
\[D_K\colon \mcY_K\to D(A_{K,\max})\]
and
\[D(A_{K,\max})=\left\{u\in\mcH^2\colon I_{\mv}U(\mv)=0,\; \mv\in\mV \right\}\hookrightarrow D(\ea),\]
see \eqref{eq:ea}, $D_K$ can be regarded as a boundedn operator into $D(\ea)$.

On the other hand, we obtain that the form
\begin{equation}\label{eq:eaomega}
\ea_{\la}(u,v)\coloneqq \ea(u,v)+\la\cdot \langle u,v\rangle_E,\quad u,v\in D(\ea)
\end{equation}
is coercive, symmetric and continuous, see Proposition \ref{prop:eaASttul} and \cite[Rem.~7.3.3]{Haase06}. It is straightforward that the operator associated with $\ea_{\la}$ is $\la-A$. For the form-domain $D(\ea_{\la})=D(\ea)$, see \eqref{eq:ea}, equipped with the usual $\mcH^1$-norm, we have that
\begin{equation}\label{eq:DA12Dea}
D((\la-A)^{\frac{1}{2}})\cong D(\ea)
\end{equation}
holds with equivalence of norms, see e.g.~\cite[Prop.~5.5.1]{Ar04}. 

Hence $D_K$ is a bounded linear operator into $D((\lambda-A)^{\frac{1}{2}})$. Since the range of $D_K$ is finite dimensional, 
\[\left\|(\lambda-A)^{\frac{1}{2}}D_K\right\|_{\HS(\mcY_K,\mcH)}<+\infty\]
holds. Thus, using \eqref{eq:DKbecsles}, the assertion in \eqref{eq:ZKtcontpf}, hence \eqref{eq:ZKtcontpf00} holds, the stochastic convolution  $Z_K(\cdot)$ is well-defined in $\mcH$. 

Finally, since the semigroup $(S(t))_{t\geq 0}$ is contractive, \cite[Rem.~1]{HauSei01} implies the continuity of the trajectories.
\end{proof}

Next, we define the mild solution of \eqref{eq:stochnetK} as in \cite[Sec.~15.1]{PeZa} (and also \cite[Sec.~1]{DPZ93}). 
Let
\[u(t)(\cdot)=u(t,\cdot)=\left(u_{\me}(t,\cdot)\right)_{\me\in\mE},\quad F=\left(F_{\me}\right)_{\me\in\mE},\quad F(u(s))=\left(F_{\me}(u_{\me}(s))\right)_{\me\in\mE}\].
\begin{definition}
If $X_K(\cdot)\in C\left([0,T],L^2(\Omega,\mcH)\right)$, $\{X_K(t)\}_{t\in [0,T]}$ is $(\mathscr{F}_t)_{t\in [0,T]}$-adapted and  for all $t\in [0,T]$,
\begin{equation}\label{eq:stochDK}
X_K(t)=X_K(t,u_0)=S(t)u_0+\int_0^t S(t-s)F(X_k(s))\ds +Z_k(t),
\end{equation}
$\mathbb{P}$-almost surely, where $Z_K$ is defined by \eqref{eq:ZKt}, then $X_K$ is called a \emph{mild solution} of the problem \eqref{eq:stochnetK}. 
\end{definition}

\begin{theorem}\label{theo:boundpertKir}
For all $u_0\in \mcH$, equation \eqref{eq:stochnetK} has a unique mild solution. The mild solution has an $\mcH$-valued continuous version and it is Markov and Feller.
\end{theorem}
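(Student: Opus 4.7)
The plan is to cast \eqref{eq:stochDK} as a random fixed-point equation on $C([0,T],L^2(\Omega,\mcH))$ and to apply the standard Banach contraction argument used, e.g., in \cite[Thm.~9.29, Sec.~15.1]{PeZa}. The key ingredients are already available: (i) Theorem \ref{theo:ZKcont} supplies a well-defined stochastic convolution $Z_K$ with continuous $\mcH$-valued sample paths; (ii) the semigroup $(S(t))_{t\geq 0}$ is contractive on $\mcH$ by Proposition \ref{prop:eaASttul}; and (iii) the componentwise Lipschitz assumption on $F_\me$ turns the Nemytskij map $F\colon\mcH\to\mcH$ into a globally Lipschitz map with constant $L\coloneqq\max_\me L_\me$, the product-$L^2$ structure of $\mcH$ making this an immediate consequence of the pointwise estimate $|F_\me(r)-F_\me(s)|\leq L_\me|r-s|$.

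With these in hand, I would define
\[\Psi(X)(t)\coloneqq S(t)u_0+\int_0^tS(t-s)F(X(s))\ds+Z_K(t),\quad t\in[0,T],\]
on the closed subspace of $\mathbb{F}$-adapted processes in $C([0,T],L^2(\Omega,\mcH))$. Equipping this space with the equivalent norm $\|X\|_\gamma\coloneqq\sup_{t\in[0,T]}e^{-\gamma t}\bigl(\mathbb{E}\|X(t)\|_\mcH^2\bigr)^{1/2}$ and using $\|S(t-s)\|\leq 1$, a direct computation with the Cauchy--Schwarz inequality shows that $\Psi$ is a strict contraction for $\gamma$ large enough. The unique fixed point is the required mild solution $X_K(\cdot,u_0)$, uniqueness in $C([0,T],L^2(\Omega,\mcH))$ being built into the argument.

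For the remaining assertions, the continuous $\mcH$-valued version is obtained by substituting the continuous modification of $Z_K$ (Theorem \ref{theo:ZKcont}) into the fixed-point identity: $t\mapsto S(t)u_0$ is continuous by strong continuity, and the drift convolution $t\mapsto\int_0^tS(t-s)F(X_K(s))\ds$ inherits continuity from the continuity of $X_K$ in the mean-square norm together with the Lipschitz bound on $F$, via a standard dominated-convergence estimate. The Markov property follows from the independence of the increments of the driving Wiener process $\beta$ and the time-homogeneity of the equation by the classical argument of \cite[Thm.~9.14, Sec.~15.1]{PeZa}. The Feller property reduces to showing that $u_0\mapsto X_K(t,u_0)$ is Lipschitz from $\mcH$ into $L^2(\Omega,\mcH)$, which follows from a Grönwall argument applied to the difference $X_K(t,u_0)-X_K(t,v_0)=S(t)(u_0-v_0)+\int_0^tS(t-s)\bigl(F(X_K(s,u_0))-F(X_K(s,v_0))\bigr)\ds$, and then concluding by dominated convergence that $\mathbb{E}\varphi(X_K(t,u_0))\to\mathbb{E}\varphi(X_K(t,v_0))$ whenever $u_0\to v_0$ in $\mcH$ and $\varphi\in C_b(\mcH)$. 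The main delicate point of the whole argument is not the fixed point itself but ensuring that Theorem \ref{theo:ZKcont} places $Z_K$ in the same state space $\mcH$ as the deterministic parts of \eqref{eq:stochDK}; once that pathwise additivity in $\mcH$ is secured, the Lipschitz framework of \cite[Sec.~15.1]{PeZa} applies verbatim.
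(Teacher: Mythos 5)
Your proposal is correct and takes essentially the same route as the paper: a Banach fixed-point argument in $C([0,T],L^2(\Omega,\mcH))$ built on Theorem \ref{theo:ZKcont}, path continuity inherited from the continuous version of $Z_K$ via the fixed-point identity, the classical Markov argument for equations driven by processes with independent increments, and a Gronwall estimate giving Lipschitz dependence on the initial datum for the Feller property. Your write-up merely makes explicit (the weighted norm, the Nemytskij Lipschitz bound) what the paper leaves as ``standard.''
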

\begin{proof}
Existence and uniqueness of a mild solution of  \eqref{eq:stochnetK}  follows by a simple standard fixed point argument in the Banach space $C\left([0,T],L^2(\Omega,\mcH)\right)$ using Theorem \ref{theo:ZKcont} which assersts that  $Z_K\in C\left([0,T],L^2(\Omega,\mcH)\right)$. Since, also by Theorem \ref{theo:ZKcont}, $Z_K$ has an  $\mcH$-valued continuous version, it can be easily seen in view of  by \eqref{eq:stochDK}, that so does $X_K$.\\
To prove that $X_K$ is Markov, we may use the same reasoning as in the proof of \cite[Thm.~9.21]{DPZbook} with $E=\mcH$ noting that $Z_K$ has a continuous version by Theorem \ref{theo:ZKcont}.\\
Finally, let $u_0,v_0\in\mcH$ arbitrary given initial values. Then, by \eqref{eq:stochDK}, the contractivity of $(S(t))_{t\geq 0}$ and the global Lipschitz continuity of $F$, a simple Gronwall argument shows that
\[\sup_{t\in [0,T]}\left\|X_K(t,u_0)-X_K(t,v_0)\right\|_{\mcH}^2\leq C(T)\cdot \|u_0-v_0\|_{\mcH}^2, \text{ almost surely,}\]
and thus $X_K$ is Feller (c.f. \cite[Rem.~9.33]{PeZa}).
\end{proof}

\begin{remark}\label{rem:Aeigen}
Observe that $D(A)\subset \mcH^1$, and by the Rellich-Kondrachov theorem, $\mcH^1\hookrightarrow \mcH$ is a continuous, compact embedding. If $\lambda\in\rho(A)$ is arbitrary, for the resolvent operator $\ran (R(\lambda,A))\subset D(A)$ holds, and we obtain that $R(\lambda,A)\colon \mcH\to \mcH$ is bounded and compact. That is, $A$ has compact resolvent and thus $A$ has only point spectrum. Since $A$ is self-adjoint and dissipative, its eigenvalues $(\lambda_k)_{k\in\nat}$ form a sequence of negative real numbers and 
\begin{equation}\label{eq:lambdak}
\lambda_k\to-\infty,\quad k\to\infty.
\end{equation}
We may then choose a set $(f_k)_{k\in\nat}\subset D(A)$ of eigenfunctions such that
\begin{equation}\label{eq:fk}
Af_k=\lambda_k f_k,\quad k\in\nat,
\end{equation}
and the functions $(f_k)_{k\in\nat}$ form a complete orthonormal system in $\mcH$.
\end{remark}

\begin{remark}
In the view of \cite[(9)--(11)]{DPZ93} we can also consider adding a space-time white noise term to $(\ref{eq:stochnetK}a)$. To obtain the statements of Theorem \ref{theo:boundpertKir} for this new equation, by \cite[(17)]{DPZ93} it is enough to show that
\begin{equation}
\int_0^T\left\|S(t)\right\|_{\HS}^2\dt<\infty.
\end{equation}
Let $(f_k)_{k\in\nat}$ be the complete orthonormal system consisting of eigenfunctions of $A$, see Remark \ref{rem:Aeigen}. Then we have
\begin{align*}
\int_0^T\left\|S(t)\right\|_{\HS}^2\dt&=\int_0^T\sum_{k=1}^{\infty}\left\|S(t)f_k\right\|^{2}\dt=\int_0^T\sum_{k=1}^{\infty}\e^{-2\lambda_k t}\dt\\
&=\int_0^T 1\dt+\int_0^T\sum_{k=2}^{\infty}\e^{-2\lambda_k t}\dt=T+\sum_{k=2}^{\infty}\frac{1}{2\lambda_k}\left(1-\e^{-2\lambda_k T}\right)\\
&\leq T+\sum_{k=2}^{\infty}\frac{1}{2\lambda_k}<\infty
\end{align*}
by Proposition \ref{prop:szigmaA}.
\end{remark}

It turns out that if we assume uniform boundedness for the vertex values of the eigenfunctions $(f_k)$, the mild solution has a continuous version in $\mcH_{\alpha}$ for $\alpha<\frac{1}{4}$.

We introduce now the boundary operator $L\colon D(L)\to \mcY_K$ defined by
\begin{equation}\label{eq:L}
\begin{split}
D(L) & = \left\{u\in \prod_{\me\in\mE}C[0,\ell_{\me}]: I_{\mv} U(\mv)=0,\; \mv\in\mV\right\},\\
L u & \coloneqq\left(u(\mv)\right)_{\mv\in\mV}\in \mcY_K,
\end{split}
\end{equation}
where $u(\mv)$ denotes the common vertex value of the function $u\in D(L)$ on the edges incident to $\mv$. That is, $L$ assigns to each function $u$ that is continuous on $G$ the vector of the vertex values of $u$. Observe that $D(A)\subset D(L)$ holds.

\begin{theorem}\label{theo:eigenfctest}
Suppose that there exists $c>0$ such that 
\begin{equation}\label{eq:fkunifbound}
\|Lf_k\|^2_{\mcY_K} \leq c,\quad k\in\nat,
\end{equation}
where $(f_k)_{k\in\nat}$ is the complete orthonormal system consisting of eigenfunctions of $A$ (see Remark \ref{rem:Aeigen}) and $L$ is the operator \eqref{eq:L}.

Then, for $\alpha<\frac{1}{4}$ the stochastic convolution process $Z_K(\cdot)$ defined in \eqref{eq:ZKt} has a continuous version in $\mcH_{\alpha}$.
\end{theorem}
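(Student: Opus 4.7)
The plan is to reduce the assertion to a Hilbert--Schmidt integrability statement for $(\lambda-A)^{1+\alpha} S(\cdot) D_K$ and then invoke \cite[Rem.~1]{HauSei01}, exactly as in the proof of Theorem \ref{theo:ZKcont}. Writing formally
\[
(\lambda-A)^\alpha Z_K(t) = \int_0^t (\lambda-A)^{1+\alpha} S(t-s) D_K \, d\beta(s),\qquad t\in [0,T],
\]
and exploiting $\Tr(\widetilde Q) < +\infty$ as before, the task is reduced to showing
\[
\int_0^T \left\|(\lambda-A)^{1+\alpha} S(t) D_K\right\|_{\HS(\mcY_K,\mcH)}^2 \, dt < +\infty.
\]
Since $(S(t))_{t\geq 0}$ is contractive (Proposition \ref{prop:eaASttul}), \cite[Rem.~1]{HauSei01} will upgrade this integrability to the existence of a continuous $\mcH$-valued version of $(\lambda-A)^\alpha Z_K$, that is, of a continuous $\mcH_\alpha$-valued version of $Z_K$.

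The heart of the argument is a reciprocity identity between $D_K$ and $L$: for every $b \in \mcY_K$ and every $k \in \nat$,
\[
(\lambda - \lambda_k)\, \langle D_K b, f_k\rangle_\mcH = \langle L f_k, b\rangle_{\mcY_K}.
\]
I would derive it by expanding $\langle A_{K,\max}(D_K b), f_k \rangle_\mcH - \langle D_K b, A f_k\rangle_\mcH$ via edgewise integration by parts and collecting the vertex contributions. Two ingredients then make everything telescope: the continuity of $f_k$ across each vertex $\mv$ converts $\sum_{\me \in \mE_\mv} c_\me(\mv) (D_K b)'_\me(\mv) f_{k,\me}(\mv)$ into $f_k(\mv)\cdot (B_K D_K b)_\mv = f_k(\mv)\, b_\mv$, while the Kirchhoff condition $B_K f_k = 0$ annihilates the dual contribution involving $f_k'$. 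The identity then follows from $(\lambda - A_{K,\max})(D_K b) = 0$ together with $A f_k = \lambda_k f_k$. This is the quantum graph analogue of the Da~Prato--Zabczyk computation that identifies $D^* A^*$ with the outward normal derivative on the boundary of an interval.

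With the identity in hand, expanding the Hilbert--Schmidt norm in bases $(e_i)_{i=1}^{n}$ of $\mcY_K$ and $(f_k)_{k\in\nat}$ of $\mcH$, and using $(\lambda-A)^{1+\alpha} S(t) f_k = (\lambda-\lambda_k)^{1+\alpha}\, \e^{\lambda_k t} f_k$, one factor $(\lambda-\lambda_k)^2$ cancels against the $(\lambda-\lambda_k)^{-2}$ produced by the reciprocity identity and yields
\[
\left\|(\lambda-A)^{1+\alpha} S(t) D_K\right\|_{\HS}^2 = \sum_{k} (\lambda - \lambda_k)^{2\alpha}\, \e^{2\lambda_k t}\, \|L f_k\|_{\mcY_K}^2.
\]
Inserting the hypothesis $\|L f_k\|_{\mcY_K}^2 \leq c$, integrating in $t$ via $\int_0^T \e^{2\lambda_k t}\, dt \leq 1/(2|\lambda_k|)$ for $\lambda_k < 0$ (with the finitely many non-negative eigenvalues absorbed into a constant), and invoking the Weyl-type asymptotics $|\lambda_k| \sim k^2$ underlying Proposition \ref{prop:szigmaA}, I would bound the integral by a constant multiple of $\sum_k (\lambda-\lambda_k)^{2\alpha - 1}$, which converges precisely when $\alpha < \tfrac{1}{4}$.

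The step I expect to be the most delicate is the reciprocity identity itself: since $D_K b$ lies in $D(A_{K,\max})$ but not in $D(A)$, the boundary contributions from Green's formula do not cancel in pairs, and one must carefully pit the continuity condition (built into $D(A_{K,\max})$) against the Kirchhoff condition (satisfied only by elements of $D(A)$). Once this is secured, the threshold $\alpha < \tfrac14$ is a routine consequence of the eigenvalue asymptotics combined with the assumed uniform vertex bound on the eigenfunctions.
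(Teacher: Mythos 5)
Your argument is correct, and its core coincides with the paper's: the decisive ingredient in both is the reciprocity identity $(\lambda-\lambda_k)\langle D_K b,f_k\rangle_{\mcH}=\langle Lf_k,b\rangle_{\mcY_K}$ (equivalently $\|D_K^*f_k\|^2_{\mcY_K}=\|Lf_k\|^2_{\mcY_K}/(\lambda-\lambda_k)^2$), obtained exactly as you describe by Green's formula on each edge, with the continuity of $f_k$ collapsing the vertex sum to $\langle B_K D_Kb,Lf_k\rangle_{\mcY_K}$ and the Kirchhoff condition $B_Kf_k=0$ killing the dual term; this is then combined with the Weyl asymptotics $\lambda-\lambda_k\asymp k^2$ of Proposition \ref{prop:szigmaA} and the hypothesis \eqref{eq:fkunifbound}. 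Where you diverge from the paper is in how path continuity is extracted. The paper invokes the factorization method of \cite[Thm.~2.3]{DPZ93}, which requires the weighted condition $\int_0^Tt^{-\gamma}\|(\lambda-A)S(t)D_K\|^2_{\HS(\mcY_K,\mcH_\alpha)}\dt<\infty$ for some $\gamma>0$; the series criterion then becomes $\sum_k(\lambda-\lambda_k)^{2\alpha+\gamma-1}<\infty$, and the need for a strictly positive $\gamma$ is what pins down $\alpha<\tfrac14$. You instead apply the contraction-semigroup maximal inequality \cite[Rem.~1]{HauSei01} to the shifted convolution $(\lambda-A)^\alpha Z_K$, needing only the unweighted integrability $\int_0^T\|(\lambda-A)^{1+\alpha}S(t)D_K\|^2_{\HS(\mcY_K,\mcH)}\dt<\infty$, which your exact identity reduces to $\sum_k(\lambda-\lambda_k)^{2\alpha-1}<\infty$ — the same threshold, reached without the $\gamma$-bookkeeping. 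This is precisely the device the paper itself uses in Theorem \ref{theo:ZKcont} for $\alpha=0$, so your appeal to \cite{HauSei01} is on the same footing as the paper's; the one point you should make explicit is that \cite[Rem.~1]{HauSei01} covers integrands of the form $S(t-s)\Phi$ where $\Phi=(\lambda-A)^{1+\alpha}D_K$ is itself not Hilbert--Schmidt (for $\alpha\ge 0$ it is not even bounded into $\mcH$) and only $\int_0^T\|S(r)\Phi\|^2_{\HS}\,dr<\infty$ is available, and that commuting the closed operator $(\lambda-A)^\alpha$ with the stochastic integral is justified. Your approach is arguably cleaner and additionally yields the maximal inequality $\mathbb{E}\sup_{t\le T}\|Z_K(t)\|^2_{\alpha}<\infty$; the paper's factorization route is the one that generalizes beyond contraction semigroups and also delivers H\"older continuity in time.
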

\begin{proof}
By a straightforward modification of \cite[Thm.~2.3]{DPZ93} to include the covariance matrix $\widetilde{Q}$ (see also,  \cite[Thm.~5.9]{DPZ93}), we have to show that if $\alpha<\frac{1}{4}$, then for a fixed $\lambda>0$ and $D_K= D_{B_K,\lambda}$, there exists $\gamma>0$ such that
\begin{equation}\label{eq:Kproof0}
\int_0^{T}t^{-\gamma}\left\|(\lambda-A)S(t)D_K\widetilde{Q}^\frac12\right\|^2_{\HS(\mcY_K,\mcH_{\alpha})}\dt <+\infty,
\end{equation}
As noted before, 
\[
\left\|(\lambda-A)S(t)D_K\widetilde{Q}^\frac12\right\|^2_{\HS(\mcY_K,\mcH_{\alpha})}\leq \left\|(\lambda-A)S(t)D_K\right\|^2_{\HS(\mcY_K,\mcH_{\alpha})}\cdot\Tr(\widetilde{Q}) 
\]
and thus it is enough to prove that
\begin{equation}\label{eq:Kproof1}
\int_0^{T}t^{-\gamma}\left\|(\lambda-A)S(t)D_K\right\|^2_{\HS(\mcY_K,\mcH_{\alpha})}\dt <+\infty.
\end{equation}
Using Remark \ref{rem:omega0} and proceeding as in \cite[Sec.~3]{DPZ93}, we obtain that \eqref{eq:Kproof1} holds if
\begin{equation}\label{eq:sorosszeg}
\sum_{k=1}^{\infty}(\lambda-\lambda_k)^{2\alpha+\gamma+1}\left\|D_K^*f_k\right\|^2_{\mcY_K}<\infty.
\end{equation}
We will show that under the assumptions,
\begin{equation}\label{eq:DKcsillagbecsles}
\left\|D_K^*f_k\right\|_{\mcY_K}^2 \leq \frac{c}{(\lambda-\lambda_k)^2},\quad k\in\nat,
\end{equation}
where $c\in\real$ is the constant from \eqref{eq:fkunifbound}. By Proposition \ref{prop:szigmaA}, there exist constants $l_1,l_2>0$ such that
\begin{equation}\label{eq:lambdakaszimpt}
l_1\cdot k^2\leq \lambda-\lambda_k\leq l_2\cdot k^2,\quad k\in\nat. 
\end{equation}
Hence, if \eqref{eq:DKcsillagbecsles} holds, by \eqref{eq:lambdakaszimpt} the terms of the series \eqref{eq:sorosszeg} can be estimated as
\[(\lambda-\lambda_k)^{2\alpha+\gamma+1}\left\|D_K^*f_k\right\|^2_{\mcY_K}\leq C \cdot \frac{k^{4\alpha+2\gamma+2}}{k^4} \]
for an appropriate constant $C>0.$ Thus, the series \eqref{eq:sorosszeg} converges if and only if
\[4\alpha+2\gamma+2-4<-1 \Longleftrightarrow 4\alpha+2\gamma<1.\]
This means that there exists appropriate $\gamma>0$ if and only if
\[\alpha<\frac{1}{4}.\] 

Now we turn to the proof of \eqref{eq:DKcsillagbecsles}. First notice that
\begin{equation}\label{eq:DKfk}
\left\|D_K^*f_k\right\|^2_{\mcY_K}=\sum_{i=1}^n \langle e_i,D_K^*f_k \rangle_{\mcY_K}^2,
\end{equation}
where
\[e_i=\left(\begin{smallmatrix}
0\\
\vdots\\
1\\
\vdots\\
0
\end{smallmatrix}\right)\leftarrow i^{\rm th}\hbox{ row}\]
the usual $i$th basis vector in $\real^n$ and $\langle \cdot,\cdot\rangle_{\mcY_K}$ denotes the scalar product in $\mcY_K$. 

By definition, for each $i=1,\dots, n$,
\begin{equation}\label{eq:eiDKcsillag}
\langle e_i,D_K^*f_k \rangle_{\mcY_K} =\langle D_K e_i, f_k \rangle_{\mcH}=\langle u^i, f_k \rangle_{\mcH},
\end{equation}
where $u^i \in D(A_{K,\max})$
\begin{equation}\label{eq:ui}
A_{K,\max}u^i=\lambda u^i,\quad B_Ku^i=e_i,
\end{equation}
see \eqref{eq:DiropK}. Integration by parts then yields
\begin{align}
\langle u^i, f_k \rangle_{\mcH}&=\frac{1}{\lambda}\langle A_{K,\max}u^i, f_k \rangle_{\mcH}\\
& =\frac{1}{\lambda}\sum_{\me\in\mE} \int_0^{\ell_{\me}} \left(c_{\me} (u^i_{\me})'\right)'(x)\cdot f_{k,{\me}}(x)\dx-\frac{1}{\lambda}\sum_{\me\in\mE} \int_0^{\ell_{\me}} p_{\me}(x)\cdot u^i_{\me}(x)\cdot f_{k,{\me}}(x)\dx\\
& =\frac{1}{\lambda}\sum_{\me\in\mE}\left[c_{\me}(x)\cdot (u^i_{\me})'(x)\cdot f_{k,{\me}}(x)\right]_0^{\ell_{\me}}\\
&-\frac{1}{\lambda}\sum_{\me\in\mE}\left( \int_0^{\ell_{\me}} c_{\me}(x)\cdot (u^i_{\me})'(x)\cdot f_{k,{\me}}'(x)\dx+\int_0^{\ell_{\me}} p_{\me}(x)\cdot u^i_{\me}(x)\cdot f_{k,{\me}}(x)\dx\right)\\
& =\frac{1}{\lambda}\left( \sum_{\me\in\mE}\left[c_{\me}(x)\cdot (u^i_{\me})'(x)\cdot f_{k,{\me}}(x)\right]_0^{\ell_{\me}}-\ea (u^i, f_{k})\right)\\
&=\frac{1}{\lambda}\left( \sum_{\me\in\mE}\left[c_{\me}(x)\cdot (u^i_{\me})'(x)\cdot f_{k,{\me}}(x)\right]_0^{\ell_{\me}}+ \langle u^i, Af_{k}\rangle_{\mcH} \right)\\
&=\frac{1}{\lambda}\left( \sum_{\me\in\mE}\left[c_{\me}(x)\cdot (u^i_{\me})'(x)\cdot f_{k,{\me}}(x)\right]_0^{\ell_{\me}}+ \lambda_k\cdot \langle u^i, f_{k}\rangle_{\mcH} \right),
\end{align}
where we have used Proposition \ref{prop:eaASttul} and \eqref{eq:fk}. This implies that
\begin{equation}
\langle u^i, f_k \rangle_{\mcH}=\frac{1}{\lambda-\lambda_k}\sum_{\me\in\mE} \left[c_{\me}(x)\cdot (u^i_{\me})'(x)\cdot f_{k,{\me}}(x)\right]_0^{\ell_{\me}}.
\end{equation}
Since $f_k \in D(L)$, we have by similar calculations as in the proof of \cite[Prop.~A.1]{KS21JEE}
\begin{equation}\label{eq:uifk}
\langle u^i, f_k \rangle_{\mcH} =\frac{1}{\lambda-\lambda_k}\langle B_K u^i, Lf_k \rangle_{\mcY_K}=\frac{1}{\lambda-\lambda_k}\langle e_i, Lf_k \rangle_{\mcY_K},
\end{equation}
where we used \eqref{eq:ui}.

Hence by \eqref{eq:fkunifbound}, \eqref{eq:DKfk}, \eqref{eq:eiDKcsillag} and \eqref{eq:uifk} we obtain
\begin{equation}
\left\|D_K^*f_k\right\|_{\mcY_K}^2=\frac{1}{(\lambda-\lambda_k)^2}\sum_{i=1}^n \langle e_i, Lf_k \rangle^2_{\mcY_K}=\frac{\|Lf_k\|_{\mcY_K}^2}{(\lambda-\lambda_k)^2}\leq \frac{c}{(\lambda-\lambda_k)^2},
\end{equation}
which is exactly \eqref{eq:DKcsillagbecsles} and the proof is complete.
\end{proof}

\begin{corollary}
Under the hypothesis of Theorem \ref{theo:eigenfctest}, if $u_0\in \mcH_{\alpha}$ for some $\alpha<\frac14$, then the unique mild solution of equation \eqref{eq:stochnetK} has a version with continuous paths in $\mcH_{\alpha}$.
\end{corollary}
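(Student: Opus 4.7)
The plan is to decompose the mild solution according to \eqref{eq:stochDK} as
\[
X_K(t)=S(t)u_0+Y(t)+Z_K(t),\qquad Y(t):=\int_0^t S(t-s)F(X_K(s))\,ds,
\]
and verify that each of the three summands admits a version with continuous paths in $\mcH_\alpha$. The stochastic convolution $Z_K$ is already handled by Theorem \ref{theo:eigenfctest}, so the task reduces to the first two (deterministic in the pathwise sense) terms.

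For the term $S(t)u_0$, since $A$ is self-adjoint and $(S(t))_{t\ge 0}$ is analytic, the fractional power $(\lambda-A)^\alpha$ commutes with $S(t)$ on $\mcH_\alpha=D((\lambda-A)^\alpha)$, hence
\[
(\lambda-A)^\alpha S(t)u_0 = S(t)(\lambda-A)^\alpha u_0,
\]
and strong continuity of $(S(t))_{t\ge 0}$ on $\mcH$ together with $u_0\in\mcH_\alpha$ gives $S(\cdot)u_0\in C([0,T],\mcH_\alpha)$.

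For $Y(t)$, the starting observation is that $X_K$ has a continuous $\mcH$-valued version by Theorem \ref{theo:boundpertKir}, and the Nemytskij operator $F$ maps $\mcH$ into $\mcH$ by global Lipschitz continuity (using $|F(r)|\le|F(0)|+L|r|$), so $s\mapsto F(X_K(s))$ is almost surely continuous and bounded in $\mcH$ on $[0,T]$. By analyticity of the semigroup, $\|(\lambda-A)^\alpha S(t)\|_{\mcL(\mcH)}\le C t^{-\alpha}$ for $t>0$, which is integrable since $\alpha<\tfrac14<1$. Therefore
\[
\bigl\|(\lambda-A)^\alpha Y(t)\bigr\|_{\mcH}\le C\int_0^t (t-s)^{-\alpha}\|F(X_K(s))\|_{\mcH}\,ds<+\infty \quad \text{a.s.},
\]
so $Y(t)\in\mcH_\alpha$ almost surely for each $t$. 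Continuity of $t\mapsto Y(t)$ in $\mcH_\alpha$ follows by a standard argument for analytic semigroups: split $Y(t+h)-Y(t)$ into the tail piece $\int_t^{t+h}S(t+h-s)F(X_K(s))\,ds$ and the main piece $\int_0^t(S(h)-I)S(t-s)F(X_K(s))\,ds$, estimate the former in $\mcH_\alpha$ using the $t^{-\alpha}$-bound above, and pass to the limit $h\to 0$ in the latter by dominated convergence, using strong continuity of $S(h)$ on $\mcH_\alpha$ combined with the pointwise integrable majorant furnished by the same $t^{-\alpha}$-bound.

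The only delicate point is really the last continuity argument for $Y$; once the fractional power is pulled inside the convolution integral using analyticity, everything reduces to the textbook treatment of the deterministic inhomogeneous Cauchy problem for analytic semigroups (see e.g.\ the framework of Proposition \ref{prop:eaASttul}). Summing the three contributions, $X_K$ has the claimed $\mcH_\alpha$-valued continuous modification.
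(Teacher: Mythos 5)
Your proof is correct and follows essentially the same route as the paper: decompose the mild solution via \eqref{eq:stochDK}, invoke Theorem \ref{theo:eigenfctest} for $Z_K$, use analyticity for $S(\cdot)u_0$, and establish $\mcH_\alpha$-regularity of the deterministic convolution from the $\mcH$-continuity of the solution and the linear growth of $F$. The only difference is that the paper delegates the convolution term to \cite[Lem.~3.6]{vNVW08} (which even gives continuity in $\mcH_\eta$ for all $\eta<1$), whereas you carry out the standard splitting argument by hand.
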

\begin{proof}
By Theorem \ref{theo:eigenfctest} $Z_K$ has a version with continuous paths in $\mcH_{\alpha}$ for $\alpha<\frac14$, and in particular, this version is $\mcH$-continuous. For all $\omega\in \Omega$ we then obtain a unique solution $\tilde{X}_K(\cdot,\omega)$ of \eqref{eq:stochDK} using a standard fixed point argument in the Banach space $C([0,T],\mcH)$. By uniqueness, $\tilde{X}_K(\cdot,\omega)$ is a version of the mild solution of \eqref{eq:stochnetK}.  On the right-hand side of \eqref{eq:stochDK} the first and the last terms are elements of $C([0,T],\mcH_\alpha)$ for $\alpha$ for all $\omega\in \Omega$. Finally, the convolution term on the right-hand-side of \eqref{eq:stochDK}, with $X_K$ replaced by $\tilde{X}_K$, is an element of the space $C([0,T],\mcH_\eta)$ by \cite[Lem.~3.6]{vNVW08} for all $\eta<1$, as $\tilde{X}_K(\cdot,\omega)\in C([0,T],\mcH)$, $F$ has linear growth and the semigroup $(S(t))_{t\geq 0}$ is analytic. Therefore, $\tilde{X}_K$ has continuous paths in $\mcH_{\alpha}$.
\end{proof}

\begin{example}\label{ex:Laplace}
Let $c_{\me}\equiv 1$ and $\ell_{\me}\equiv 1$, $\me\in\mE$ in \eqref{eq:stochnetK}. Then \eqref{eq:stochnetK} has a mild solution of the form \eqref{eq:stochDK} with continuous paths in $\mcH_{\alpha}$ for $\alpha<\frac14$.
\end{example}
\begin{proof}
By \cite[Lem.~2.1]{CaPi07} and \cite[Sec.~2]{Ki20}, we have that
\[\|Lf_k\|^2_{\mcY_K}\leq c,\quad k\in \nat,\]
hence \eqref{eq:fkunifbound} holds with an appropriate positive $c$. Thus, by Theorem \ref{theo:eigenfctest}, problem \eqref{eq:stochnetK} has a mild solution of the form \eqref{eq:stochDK} with continuous paths in $\mcH_{\alpha}$ for $\alpha<\frac14$.
\end{proof}

\begin{remark}\label{rem:polynnonlin}
In $(\ref{eq:stochnetK}a)$ we may also consider odd-degree polynomial nonlinearities of the form
\[F_{\me}(x)=-x^{2k_{\me}+1}+\sum_{j=0}^{2k_{\me}}a_{\me,j}x^j,\quad \me\in\mE\]
as in \cite[Example after Thm.~4.2]{DPZ93} provided that the hypothesis of Theorem \ref{theo:eigenfctest} is satisfied. Then the mapping $F=\left(F_{\me}\right)_{\me\in\mE}$ satisfies the conditions of \cite[Thm.~4.2]{DPZ93} with
\[K\coloneqq \prod_{\me\in\mE}L^{2(2k_{\me}+1)}(0,\ell_{\me}),\quad J\coloneqq \prod_{\me\in\mE}L^{2(2k_{\me}+1)^2}(0,\ell_{\me}).\]
Using Theorem \ref{theo:eigenfctest} and Sobolev's embedding, we obtain that the stochastic convolution process $Z_K(\cdot)$ has a $J$-continuous version. Therefore, by  \cite[Thm.~4.2]{DPZ93}, equation \eqref{eq:stochnetK}, has a $\mcH$-continuous 
mild solution. Uniqueness in this case follows the same way as in the proof of \cite[Thm.~4.8(i)]{DP04} using the one sided Lipschitz property 
\[\langle F(u)-F(v),u-v\rangle_{\mcH} \leq C \|u-v\|_{\mcH}^2,\,u,v\in \mcH, \,C>0,\]
and the dissipativity of $A$.
\end{remark}

\section{Stochastic perturbation of the vertex conditions}\label{sec:stochKircont}

Here we briefly investigate the problem where all the boundary conditions are perturbed by some noise, but due to the low regularity shown below, only in the linear case. That is, we consider the problem
\begin{equation}\label{eq:stochnetKD}
\left\{\begin{aligned}
\dot{u}_{\me}(t,x) & =  (c_{\me} u_{\me}')'(t,x)-p_{\me}(x) u_{\me}(t,x), &x\in (0,\ell_{\me}),\; t\in(0,T],\; \me\in\mE,\;\; & (a)\\
\dot{\Phi}_{\mv}(t)& = I_{\mv}U(t,\mv),\;   &t\in(0,T],\; \mv\in\mV,\;\;& (b)\\
\dot{\beta}_{\mv}(t) & = C(\mv)^{\top}U'(t,\mv), & t\in(0,T],\; \mv\in\mV,\;\;& (c)\\
 u_{\me}(0,x) & =  u_{0,\me}(x), &x\in [0,\ell_{\me}],\; \me\in\mE,\;\;& (d)
\end{aligned}
\right.
\end{equation}
see \eqref{eq:stochnet}. The process 
\[\Theta\coloneqq\begin{pmatrix}
	\left(\Phi_{\mv}\right)_{\mv\in\mV}\\
	\left(\beta_{\mv}\right)_{\mv\in\mV}
\end{pmatrix}.\] 
is  a $\real^{2m}$-valued Brownian motion (Wiener process) with covariance matrix 
\begin{equation}\label{eq:kovmtxKD}
Q\in \real^{2m\times 2m}.
\end{equation} 

For the sake of simplicity we fix again $\lambda>0$, denote $D\coloneqq D_{B,\lambda}$, and analogously to \eqref{eq:ZKt} we define the stochastic convolution process by
\begin{equation}\label{eq:Zt}
Z(t)\coloneqq \int_0^t (\lambda-A) S(t-s)D\, d\Theta(t) . 
\end{equation}

We next prove that $Z(\cdot)$ has values in the fractional domain space of $A$ of order $\alpha< -\frac14$ only. Therefore, Nemytskij type nonlinearities, as in the previous section, cannot be considered as pont evaluation is not well-defined anymore.

\begin{theorem}\label{theo:boundpertKircont}
The stochastic convolution process $Z(\cdot)$ has continuous paths in $\mcH_{\alpha}$ for $\alpha<-\frac{1}{4}$.
\end{theorem}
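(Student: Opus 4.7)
My plan is to mirror the proof strategy of Theorem \ref{theo:eigenfctest}, now applied to the \emph{full} boundary operator $B$ from \eqref{eq:opB} and the Dirichlet operator $D = D_{B,\lambda}$ from Proposition \ref{prop:Diropletezik}, and to track carefully how the extra contribution coming from the continuity perturbation worsens the regularity from $\alpha<\tfrac14$ down to $\alpha<-\tfrac14$. As in the proof of Theorem \ref{theo:eigenfctest}, via \cite[Thm.~2.3, Thm.~5.9]{DPZ93} (absorbing the covariance by $\Tr(Q)$) the $\mcH_\alpha$-continuity claim reduces to the existence of $\gamma>0$ with
\[
\int_0^T t^{-\gamma}\left\|(\lambda-A)S(t)D\right\|^2_{\HS(\mcY,\mcH_\alpha)}\dt<\infty,
\]
and expansion in the orthonormal eigenbasis $(f_k)_{k\in\nat}$ of Remark \ref{rem:Aeigen} converts this, as in the previous section, into the summability condition
\[
\sum_{k=1}^{\infty}(\lambda-\lambda_k)^{2\alpha+\gamma+1}\left\|D^{*}f_k\right\|^2_{\mcY}<\infty.
\]

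The core step will be to establish the sharp bound $\|D^{*}f_k\|_{\mcY}^2\leq C/|\lambda_k|$. For $y\in\mcY$ put $w:=Dy$, so that $A_{\max}w=\lambda w$ and $Bw=y$. Applying the Green (Lagrange) identity on the network to the pair $(f_k,w)$ and using $Af_k=\lambda_k f_k$, the resulting boundary bilinear form at each vertex $\mv$ splits naturally into two pieces, exploiting both vertex conditions $I_{\mv}F_k(\mv)=0$ and $C(\mv)^{\top}F_k'(\mv)=0$ satisfied by $f_k\in D(A)$: the Kirchhoff-perturbation block of $Bw=y$ pairs with the common vertex value $f_k(\mv)$ (exactly as in Theorem \ref{theo:eigenfctest}), while the continuity-perturbation block pairs with a fixed linear combination of the outgoing derivatives $(f_{k,\me}'(\mv))_{\me\in\mE_{\mv}}$. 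Collecting both contributions and dualising yields
\[
\|D^{*}f_k\|^2_{\mcY}\leq\frac{C}{(\lambda-\lambda_k)^2}\left(\|Lf_k\|^2_{\mcY_K}+\sum_{\mv\in\mV}\sum_{\me\in\mE_{\mv}}\left|f_{k,\me}'(\mv)\right|^2\right).
\]

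It remains to bound both vertex sums on the right by $C|\lambda_k|$. For $\|Lf_k\|^2_{\mcY_K}$ this follows from the one-dimensional Sobolev embedding $H^1(0,\ell_{\me})\hookrightarrow C[0,\ell_{\me}]$ on each edge together with $\|f_k\|^2_{\mcH^1}\lesssim|\lambda_k|$, which is a consequence of $\ea(f_k,f_k)=-\lambda_k$ and $\|f_k\|_{\mcH}=1$. The derivative sum is the delicate point: a naive trace inequality yields only $O(|\lambda_k|^{3/2})$, which would force $\alpha<-\tfrac12$. To obtain the right scaling, I will use the eigenfunction ODE $(c_{\me}f_{k,\me}')'=(p_{\me}+\lambda_k)f_{k,\me}$ on each edge to construct a Pr\"ufer-type energy whose variation along the edge is controlled, and deduce the pointwise bound
\[
\left|f_{k,\me}'(\mv)\right|^2\leq\frac{C}{\ell_{\me}}\left(\|f_{k,\me}'\|^2_{L^2}+|\lambda_k|\,\|f_{k,\me}\|^2_{L^2}\right),
\]
so that summation over $\mv\in\mV$ and $\me\in\mE_{\mv}$, combined with $\|f_k\|^2_{\mcH^1}+|\lambda_k|\|f_k\|^2_{\mcH}\lesssim|\lambda_k|$, produces the desired $O(|\lambda_k|)$.

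Combining the two estimates gives $\|D^{*}f_k\|^2_{\mcY}\leq C/|\lambda_k|$, and by Proposition \ref{prop:szigmaA} the series of interest behaves as $\sum_k k^{4\alpha+2\gamma}$, which converges precisely when $4\alpha+2\gamma<-1$; a positive $\gamma$ can be chosen exactly when $\alpha<-\tfrac14$. The existence of an $\mcH_\alpha$-continuous modification of $Z$ then follows, as in the proof of Theorem \ref{theo:ZKcont}, from the contractivity of $(S(t))_{t\geq 0}$ and \cite[Rem.~1]{HauSei01}. The principal obstacle in the argument is the sharp vertex bound on $|f_{k,\me}'(\mv)|^2$: a generic Sobolev trace is insufficient and loses a factor of $|\lambda_k|^{1/2}$, and one must exploit the ODE structure of the eigenfunctions via the Pr\"ufer-type identity to recover it.
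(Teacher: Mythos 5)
Your proposal is correct in outline but takes a genuinely different route from the paper. The paper never expands in the eigenbasis for this theorem: it first shows, via the identification $D(\ea)\cong D((-A)^{1/2})$ and complex interpolation ($[D((-A)^{1/2}),\mcH]_{\theta}\cong\prod_{\me}H^{\theta}(0,\ell_{\me})\times\{0_{\real^n}\}$ for $\theta<\tfrac12$, combined with $[D(A),\mcH]_{\rho}\cong\mcH_{\rho}$), that the Dirichlet map satisfies $D\colon\mcY\to\mcH_{\rho}$ for every $\rho<\tfrac14$; it then factors $(\lambda-A)^{1-\delta}=(\lambda-A)^{\ve}S(t)(\lambda-A)^{1-\delta-\ve}$ and uses the analytic-smoothing bound $\|(\lambda-A)^{\ve}S(t)\|\lesssim t^{-\ve}$, balancing $\ve$ against $\gamma$ to land exactly at $\delta>\tfrac14$. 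That argument is softer: it needs no information about individual eigenfunctions, no Weyl asymptotics, and no trace estimates. Your spectral route instead reduces everything to $\|D^{*}f_k\|^2_{\mcY}\lesssim|\lambda_k|^{-1}$, and the arithmetic ($\sum_k k^{4\alpha+2\gamma}<\infty$ iff $4\alpha+2\gamma<-1$) correctly reproduces the threshold $\alpha<-\tfrac14$; the Green-identity decomposition of $\langle Dy,f_k\rangle$ into the $Lf_k$-pairing (Kirchhoff block) and the vertex-derivative pairing (continuity block) is also sound. What your approach buys is an explicit identification of where the extra half derivative is lost, namely in the traces $f_{k,\me}'(\mv)$ of size $|\lambda_k|^{1/2}$ versus the much smaller vertex values, and in principle sharpness, since the Green identity is an equality rather than an upper bound. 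One caveat: your crucial estimate $|f_{k,\me}'(\mv)|^2\lesssim\|f_{k,\me}'\|^2_{L^2}+|\lambda_k|\,\|f_{k,\me}\|^2_{L^2}$ via the energy $E=(c_{\me}f')^2+c_{\me}(|\lambda_k|-p_{\me})f^2$ requires differentiating $c_{\me}(|\lambda_k|-p_{\me})$ in the Gronwall step, i.e.\ it needs $c_{\me}$ and $p_{\me}$ of bounded variation or Lipschitz, which is strictly more than the paper's standing hypotheses $c_{\me}\in C[0,\ell_{\me}]$, $p_{\me}\in L^{\infty}(0,\ell_{\me})$. Under those weaker hypotheses your plan has a gap at precisely the step you flag as delicate, whereas the interpolation argument of the paper goes through unchanged.
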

\begin{proof}
Let $\delta>\frac{1}{4}$ arbitrary fixed and we will prove the result for $\alpha\coloneqq -\delta$. Without loss of generality we may assume that $\delta$ is close to $\frac{1}{4}$. 

By a straightforward modification of \cite[Thm.~2.3]{DPZ93} to include the covariance matrix $Q$, see \eqref{eq:kovmtxKD}, we have to show that for some $\gamma>0$,
\begin{equation}\label{eq:proof0}
\int_0^{T}t^{-\gamma}\left\|(\lambda-A)S(t)DQ^{\frac12}\right\|^2_{\HS(\mcY,\mcH_{-\delta})}\dt <+\infty,
\end{equation}
where $\HS$ denotes the Hilbert--Schmidt-norm between the appropriate spaces (see also, \cite[Theorem 5.9]{DPZ93}). As 
\[
\left\|(\lambda-A)S(t)DQ^{\frac12}\right\|^2_{\HS(\mcY,\mcH_{-\delta})}\leq \left\|(\lambda-A)S(t)D\right\|^2_{\HS(\mcY,\mcH_{-\delta})} \cdot\Tr(Q),
\]
it suffices to prove that
\begin{equation}\label{eq:proof1}
\int_0^{T}t^{-\gamma}\left\|(\lambda-A)S(t)D\right\|^2_{\HS(\mcY,\mcH_{-\delta})}\dt <+\infty.
\end{equation}
First we will prove that the Dirichlet operator $D$ maps $\mcY$ into $\mcH_{\rho}$ for $0<\rho<\frac{1}{4}$. By \eqref{eq:DA12Dea} and  Remark \ref{rem:omega0}, we have
\begin{equation}\label{eq:DA12Dea2}
D(\ea)\cong D((\la-A)^{\frac{1}{2}})\cong D((-A)^{\frac{1}{2}})
\end{equation}
with equivalent norms. Using this, similarly as in \cite[Lem.~3.6]{KS21} one can show that
\begin{equation}
D((-A)^{\frac{1}{2}})\cong \prod_{\me\in\mE}H_0^1(0,\ell_{\me})\times\real^n.
\end{equation}
By \cite[Sec.~4.3.3]{Triebel78} we have that if $\theta<\frac{1}{2}$, for the complex interpolation space
\begin{equation}\label{eq:DA12interpol}
\left[D((-A)^{\frac{1}{2}}),\mcH\right]_{\theta}\cong \prod_{\me\in\mE}H^{\theta}(0,\ell_{\me}) \times\{0_{\real^n}\}
\end{equation}
holds -- see also \cite{BMZ08} --, where we have used $\mcH\cong\mcH\times \{0_{\real^n}\}$ and \cite[Thm.~4.2.2]{BeLo}.
Furthermore, \cite[Thm.~6.6.9]{Haase06} implies that for $0<\theta<1$,
\begin{equation}\label{eq:interpol1}
\left[D((-A)^{\frac{1}{2}}),\mcH\right]_{\theta}\cong \left[D(A),\mcH\right]_{\frac{\theta}{2}}.
\end{equation}
Using \cite[Thm.~in $\mathsection$4.7.3]{Ar04} and \cite[Prop.~in $\mathsection$4.4.10]{Ar04}, we obtain that for any $0<\rho<1$,
\begin{equation}\label{eq:interpol2}
\left[D(A),\mcH\right]_{\rho}\cong  D((-A)^{\rho})=\mcH_{\rho}.
\end{equation}
Combining \eqref{eq:DA12interpol}, \eqref{eq:interpol1} and \eqref{eq:interpol2} yields that for $0<\rho<\frac{1}{4}$
\begin{equation}\label{eq:Hrhoizom}
\prod_{\me\in\mE}H^{2\rho}(0,\ell_{\me}) \times \{0_{\real^n}\}\cong \mcH_{\rho}
\end{equation}
holds. Hence, by \eqref{eq:domAmax} and \eqref{eq:Hrhoizom} we have that for $0<\rho<\frac{1}{4}$
\begin{equation}\label{eq:DmapsDA14}
D\colon \mcY\to D(A_{\max})\hookrightarrow \prod_{\me\in\mE}H^{2\rho}(0,\ell_{\me})\cong\prod_{\me\in\mE}H^{2\rho}(0,\ell_{\me}) \times \{0_{\real^n}\}\cong \mcH_{\rho}.
\end{equation}
Now we are in the position to prove \eqref{eq:proof1}. We take a small $\ve>0$, to be specified later, and estimate the integral in the following way, where we use the analyticity of the semigroup $(S(t))_{t\geq 0}$ from Proposition \ref{prop:eaASttul},
\begin{align}
&\int_0^{T}t^{-\gamma}\left\|(\lambda-A)S(t)D\right\|^2_{\HS(\mcY,\mcH_{-\delta}))}\dt \leq c\cdot \int_0^{T}t^{-\gamma}\left\|(\lambda-A)^{1-\delta}S(t)D\right\|^2_{\HS(\mcY,\mcH)}\dt\notag\\
& = c\cdot\int_0^T t^{-\gamma}\left\|(\lambda-A)^{\ve}S(t)(\lambda-A)^{1-\delta-\ve}D\right\|^2_{\HS(\mcY,\mcH)}\dt\notag\\
&\leq C_T\cdot\int_0^T t^{-\gamma-2\ve}\dt\cdot \left\|(\lambda-A)^{1-\delta-\ve}D\right\|^2_{\HS(\mcY,\mcH)}.\label{eq:intbecsles}
\end{align}
In the last expression,
\begin{equation}\label{eq:tkitevo}
\int_0^T t^{-\gamma-2\ve}\dt<\infty \Longleftrightarrow \gamma+2\ve<1 \Longleftrightarrow \ve<\frac{1}{2}-\frac{\gamma}{2}.
\end{equation}
Since the range of $D$ is finite dimensional, if we take a suitable $\ve>0$ such that $(\lambda-A)^{1-\delta-\ve}D$ is a bounded operator, then the Hilbert-Schmidt norm in \eqref{eq:intbecsles} is finite. To satisfy this, by \eqref{eq:DmapsDA14},
\begin{equation}\label{eq:deltaepszilon}
0<1-\delta-\ve<\frac{1}{4}
\end{equation}
should hold. On the other hand, by \eqref{eq:tkitevo},
\[1-\delta-\ve>\frac{1}{2}+\frac{\gamma}{2}-\delta.\]
Hence, we have to find $\gamma>0$ such that
\[\frac{1}{2}+\frac{\gamma}{2}-\delta<\frac{1}{4}\Longleftrightarrow  \frac{1}{4}+\frac{\gamma}{2}<\delta\]
holds true. Since, by assumption, $\delta>\frac{1}{4}$, an appropriate $0<\gamma<1$ can be chosen. Taking any $\ve>0$ satisfying \eqref{eq:tkitevo} and \eqref{eq:deltaepszilon}, that is
\[\frac{3}{4}-\delta<\ve<\frac{1}{2}-\frac{\gamma}{2},\] 
the estimate \eqref{eq:intbecsles} yields a finite bound for the left-hand-side of \eqref{eq:proof1}. This means that for $\gamma$ chosen this way, \eqref{eq:proof1}, hence \eqref{eq:proof0} is satisfied, and the proof is complete.
\end{proof}


\appendix

\section{Surjectivity of the boundary operator \texorpdfstring{$B$}{B}}\label{sec:appDir}

In this section we complete the proof of Theorem \ref{prop:Diropletezik}.

\begin{proposition}\label{prop:Bsurj}
For the operator $(B,D(B))$ defined in \eqref{eq:opB}, $\mathrm{Im}\,B=\real^{2m}$ holds.
\end{proposition}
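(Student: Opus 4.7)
The plan is to show surjectivity by explicit construction: given an arbitrary target $y\in\real^{2m}$, decomposed as $y=\bigl((y^{(1)}_\mv)_{\mv\in\mV},(y^{(2)}_\mv)_{\mv\in\mV}\bigr)$ with $y^{(1)}_\mv\in\real^{d_\mv-1}$ and $y^{(2)}_\mv\in\real$, I will build a $u\in D(A_{\max})=\mcH^2$ with $Bu=y$ by first prescribing its traces $U(\mv), U'(\mv)$ at every vertex and then glueing together edge-by-edge. Because the graph $G$ is simple (no loops, no multiple edges), each edge has two distinct endpoints and the constructions at the two ends do not interfere.

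First I deal with the vertex data. Fix $\mv\in\mV$. The matrix $I_{\mv}\in\real^{(d_\mv-1)\times d_\mv}$ defined in \eqref{eq:Iv} is a bidiagonal matrix of full row rank $d_\mv-1$ (its rows are clearly linearly independent), hence $I_\mv\colon\real^{d_\mv}\to\real^{d_\mv-1}$ is surjective. Thus I can choose $U(\mv)\in\real^{d_\mv}$ with $I_\mv U(\mv)=y^{(1)}_\mv$. Similarly, since $c_{\me}(\mv)\ge c_0>0$ for every $\me\in\mE_\mv$, the row $C(\mv)^\top=(c_{\me_1}(\mv),\dots,c_{\me_{d_\mv}}(\mv))\in\real^{1\times d_\mv}$ is nonzero, so the linear map $C(\mv)^\top\colon\real^{d_\mv}\to\real$ is surjective; pick $U'(\mv)\in\real^{d_\mv}$ with $C(\mv)^\top U'(\mv)=y^{(2)}_\mv$. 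Doing this independently at every vertex yields prescribed values $(U(\mv),U'(\mv))_{\mv\in\mV}$.

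Next I realise these traces by an $H^2$-function on each edge. Fix $\me\in\mE$ with endpoints $\mv_0$ (associated to $x=0$) and $\mv_1$ (associated to $x=\ell_\me$). From the data above I read off four real numbers $a_0,a_1,b_0,b_1$, where $a_0$ and $a_1$ are the prescribed edge-values at $\mv_0$ and $\mv_1$, while $b_0$ and $-b_1$ are the prescribed outward derivatives at $\mv_0$ and $\mv_1$ respectively (the minus sign comes from the convention that derivatives are taken in the direction pointing away from the vertex into the edge, see the remark after \eqref{eq:Fvv}). The Hermite interpolation problem
\[
u_\me(0)=a_0,\quad u_\me(\ell_\me)=a_1,\quad u_\me'(0)=b_0,\quad u_\me'(\ell_\me)=-b_1
\]
admits a solution in $C^\infty[0,\ell_\me]$, for instance the unique cubic polynomial determined by these four conditions, and this solution belongs to $H^2(0,\ell_\me)$. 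Performing this on every edge and assembling yields $u=(u_\me)_{\me\in\mE}\in\mcH^2=D(A_{\max})=D(B)$, and by construction $U(\mv)$ and $U'(\mv)$ agree with the vertex data chosen in the previous paragraph for every $\mv\in\mV$. Consequently
\[
Bu=\begin{pmatrix}(I_\mv U(\mv))_{\mv\in\mV}\\(C(\mv)^\top U'(\mv))_{\mv\in\mV}\end{pmatrix}=\begin{pmatrix}(y^{(1)}_\mv)_{\mv\in\mV}\\(y^{(2)}_\mv)_{\mv\in\mV}\end{pmatrix}=y,
\]
which proves $\mathrm{Im}\,B=\real^{2m}$.

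The only place where anything could go wrong is the edge-glueing step: if an edge were a loop, then $\mv_0=\mv_1$ and the two Hermite conditions at the same vertex could be contradictory, and if two distinct edges joined the same pair of vertices, one might worry about parity of indexing. Both issues are ruled out by the standing assumption that $G$ is simple, so the construction above goes through without complication.
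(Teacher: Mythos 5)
Your proof is correct, and it takes a genuinely different and more elementary route than the paper's. You decouple the problem into a vertex-local linear-algebra step (using that $I_{\mv}$ has full row rank $d_{\mv}-1$ and that $C(\mv)^{\top}$ is a nonzero row, since $c_{\me}(\mv)\geq c_0>0$ by continuity of $c_{\me}$ up to the endpoints), followed by cubic Hermite interpolation on each edge to realize the prescribed traces; you handle the sign convention for the outward derivative at $x=\ell_{\me}$ correctly, and simplicity of $G$ indeed guarantees that the four interpolation data on each edge sit at two distinct endpoints, so the edge-wise constructions never conflict. The paper instead makes the global ansatz $u_{\me}(x)=\alpha_{\me}\e^{-\gamma x}+\beta_{\me}\e^{-\gamma(\ell_{\me}-x)}$, rewrites $Bu=z$ as a $2m\times 2m$ linear system $\bigl(N_{\gamma}+\widetilde{N}_{\gamma}F_{\gamma}\bigr)\left(\begin{smallmatrix}\alpha\\ \beta\end{smallmatrix}\right)=z$, and proves invertibility for $\gamma$ large by showing that $N_{\gamma}$ is, up to permutation, block diagonal with invertible vertex blocks while $\widetilde{N}_{\gamma}F_{\gamma}$ is a small perturbation thanks to the factors $\e^{-\gamma\ell_{\me}}$, following the idea of \cite[Prop.~3.2]{KrPu20}. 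Your argument is shorter and avoids all norm estimates; the paper's version produces an explicit one-parameter family of preimages of exponential type, but nothing else in the paper uses that extra structure, so for verifying Greiner's condition (c) your construction is entirely adequate.
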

\begin{proof}
We show that for $z\in\real^{2m}$ given arbitrarily, there exists $u\in D(A_{\max})=\mcH^2$ (see \eqref{eq:domAmax}) such that 
\begin{equation}\label{eq:Buz}
Bu=z.
\end{equation}
We will seek $u\in\mcH^2$ in the form 
\begin{equation}\label{eq:uinAmax}
u(x)=\begin{pmatrix}
	\alpha_1\e^{-\gamma x}+\beta_1 \e^{-\gamma(\ell_{1}- x)}\\
	\vdots\\
	\alpha_m\e^{-\gamma x}+\beta_m \e^{-\gamma(\ell_{m}- x)}
\end{pmatrix}
\end{equation}
for suitable vectors $\alpha=\left(\alpha_1,\dots ,\alpha_m\right)^{\top}$, $\beta=\left(\beta_1,\dots ,\beta_m\right)^{\top}$ and constant $\gamma>0$, where $\ell_1,\dots ,\ell_m$ denote the edges lengths in the graph $G$. 

We introduce the notation 
\begin{equation}
u(\ell)\coloneqq \begin{pmatrix}
	u(\ell_{1})\\
	\vdots\\
	u(\ell_{m})
\end{pmatrix},\quad
u'(\ell)\coloneqq \begin{pmatrix}
	u'(\ell_{1})\\
	\vdots\\
	u'(\ell_{m})
\end{pmatrix},\quad u\in \mcH^2,
\end{equation}
and
\begin{equation}
z_C\coloneqq \begin{pmatrix}
	z_1\\
	\vdots \\
	z_{2m-n}
\end{pmatrix},\quad z_K\coloneqq \begin{pmatrix}
	z_{2m-n+1}\\
	\vdots \\
	z_{2m}
\end{pmatrix}
\end{equation}
for the two ,,blocks'' of the vector $z\in\real^{2m}$.
Using the idea of the proof of \cite[Prop.~3.2]{KrPu20}, there exist $(2m-n)\times m$ matrices $V_0$ and $V_1$ and $n\times m$ matrices $W_0$ and $W_1$ such that equation \eqref{eq:Buz} can be rewritten as
\begin{equation}\label{eq:Buzrew}
V_0u(0)+V_1 u(\ell)=z_C,\quad W_0u'(0)-W_1 u'(\ell)=z_K.
\end{equation}
A straightforward computation shows that for the function in \eqref{eq:uinAmax}, equations \eqref{eq:Buzrew} turn into
\begin{equation}\label{eq:Buzrewu}
\begin{split}
\left(V_0+ V_1 E_{\gamma}\right)\cdot \alpha+\left(V_0E_{\gamma}+V_1\right)\cdot \beta &=z_C,\\
\left(-\gamma W_0+\gamma W_1E_{\gamma}\right)\cdot \alpha+\left(\gamma W_0E_{\gamma}-\gamma W_1\right)\cdot \beta &=z_K
\end{split}
\end{equation}
where $E_{\gamma}$ is the diagonal matrix
\begin{equation}
E_{\gamma}\coloneqq \begin{pmatrix}
\e^{-\gamma \ell_1} &  & \\
& \ddots & \\
 &  & \e^{-\gamma \ell_m }
\end{pmatrix} 
\end{equation}
We rewrite now \eqref{eq:Buzrewu} with $(2m)\times (2m)$ block-matrices as
\begin{equation}\label{eq:Buzblock}
\begin{pmatrix}
	V_0 & V_1 \\
	-\gamma W_0 & -\gamma W_1
\end{pmatrix}\cdot \begin{pmatrix}
	\alpha \\
	\beta
\end{pmatrix} + \begin{pmatrix}
	V_1 & V_0 \\
	\gamma W_1 & \gamma W_0
\end{pmatrix}\cdot \begin{pmatrix}
	E_{\gamma} & 0 \\
	0 & E_{\gamma}
\end{pmatrix}\cdot \begin{pmatrix}
	\alpha \\
	\beta
\end{pmatrix}=z.
\end{equation}
Denoting by
\begin{equation}
N_{\gamma}\coloneqq \begin{pmatrix}
	V_0 & V_1 \\
	-\gamma W_0 & -\gamma W_1
\end{pmatrix},\quad \widetilde{N}_{\gamma}\coloneqq \begin{pmatrix}
	V_1 & V_0 \\
	\gamma W_1 & \gamma W_0
\end{pmatrix},\quad F_{\gamma}\coloneqq \begin{pmatrix}
	E_{\gamma} & 0 \\
	0 & E_{\gamma}
\end{pmatrix},
\end{equation}
we have to show that
\begin{equation}\label{eq:BuIm}
\mathrm{Im}\, \left(N_{\gamma}+\widetilde{N}_{\gamma}\cdot F_{\gamma}\right)=\real^{2m}.
\end{equation}
First we show that $N_{\gamma}$ is invertible. Using again ideas from the proof of \cite[Prop.~3.2]{KrPu20}, we can permute rows and columns of $N_{\gamma}$ such that we obtain a block diagonal matrix $M_{\gamma}\in\real^{(2m)\times (2m)}$, consisting of $n$ blocks of size $d_{\mv}\times d_{\mv}$ for $\mv\in\mV$. Denoting by $M_{\mv}$ the block corresponding to vertex $\mv$ in $M_{\gamma}$ we have that if $d_{\mv}=1$ then $M_{\mv}=-\gamma\cdot c_{\me}(\mv)$ for $\mE_{\mv}=\{\me\}$. Otherwise,
\begin{equation*}
M_{\mv}=\begin{pmatrix}
	1 & -1 & & \\
	& \ddots & \ddots & \\
	& & 1 & -1 \\
	-\gamma\cdot c_{\me_1}(\mv) & \hdots & \hdots & -\gamma\cdot c_{\me_{d_{\mv}}}(\mv)
\end{pmatrix}
\end{equation*}
for $\mE_{\mv}=\{\me_1,\dots ,\me_{d_{\mv}}\}$.
A straightforward computation yields that
\begin{equation}\label{eq:detMv}
\det M_{\mv}=-\gamma\cdot\left(c_{\me_1}(\mv)+\cdots +c_{\me_{d_{\mv}}}(\mv)\right)\neq 0
\end{equation}
because of the assumption on the $c_{\me}$'s. Hence, we obtain that
\[\det M_{\gamma}=\prod_{\mv\in\mV}\det M_{\mv}=(-\gamma)^n\cdot K_c\neq 0\]
with
\[K_c=\prod_{\substack{\mv\in\mV\\ \mE_{\mv}=\{\me_1,\dots ,\me_{d_{\mv}}\}}}\left(c_{\me_1}(\mv)+\cdots +c_{\me_{d_{\mv}}}(\mv)\right).\]
Hence, $M_{\gamma}$ is invertible. Since permutations do not change the determinant of a matrix, we also have that $N_{\gamma}$ is invertible. That is, in \eqref{eq:BuIm} we have
\begin{equation}\label{eq:Bsurjbiz1}
N_{\gamma}+\widetilde{N}_{\gamma}\cdot F_{\gamma} =N_{\gamma}\cdot\left(\mathrm{Id}+N_{\gamma}^{-1}\cdot \widetilde{N}_{\gamma}\cdot F_{\gamma}\right)
\end{equation}
with $\mathrm{Id}=\mathrm{Id}_{(2m)\times (2m)}$. If for an appropriate $\gamma>0$, 
\begin{equation}\label{eq:Bsurjbiz2}
\left\|N_{\gamma}^{-1}\cdot \widetilde{N}_{\gamma}\cdot F_{\gamma}\right\|_{\max}<1
\end{equation} 
is satisfied, then the matrix \eqref{eq:Bsurjbiz1} is invertible, hence \eqref{eq:BuIm} holds.

First we estimate the max-norm of $N_{\gamma}^{-1}$, or, which is the same, the max-norm of $M_{\gamma}^{-1}$. Clearly, $M_{\gamma}^{-1}$ is the block-diagonal matrix of blocks $M_{\mv}^{-1}$. Each cofactor $M_{k,l}$ of $M_{\mv}$ is the determinant of a matrix of the same type as $M_{\mv}$ itself (but having dimension $(d_{\mv}-1)\times (d_{\mv}-1)$), expect those in the last row of $M_{\mv}^{-1}$ which are all equal to $1$. Thus we obtain by \eqref{eq:detMv} that for $\gamma$ big enough,
\[\left\|M_{\mv}^{-1}\right\|_{\max}=\frac{1}{|\det M_{\mv}|}\cdot \max_{k,l=1,\dots ,d_{\mv}}|M_{k,l}|=\frac{\gamma\cdot L_{c,\mv}}{\gamma\cdot\left(c_{\me_1}(\mv)+\cdots c_{\me_{d_{\mv}}}(\mv)\right)}\eqqcolon K_{c,\mv}\]
with constants $L_{c,\mv}$, $K_{c,\mv}>0.$ Hence, for $\gamma$ big enough,
\begin{equation}\label{eq:Mgammainvnorm}
\left\|N_{\gamma}^{-1}\right\|_{\max}=\left\|M_{\gamma}^{-1}\right\|_{\max}=\max_{\mv\in\mV}\left\|M_{\mv}^{-1}\right\|_{\max}=\max_{\mv\in\mV}K_{c,\mv}\eqqcolon K_c.
\end{equation}

Similarly as above, we can permute rows and columns of $\widetilde{N}_{\gamma}$ such that we obtain a block diagonal matrix $\widetilde{M}_{\gamma}$ consisting of $1\times 1$ blocks $\{\gamma\cdot c_{\me}(\mv)\}$ and blocks of size $d_{\mv}\times d_{\mv}$ (for $d_{\mv}>1$)
\begin{equation*}
\widetilde{M}_{\mv}=\begin{pmatrix}
	1 & -1 & & \\
	& \ddots & \ddots & \\
	& & 1 & -1 \\
	\gamma\cdot c_{\me_1}(\mv) & \hdots & \hdots & \gamma\cdot c_{\me_{d_{\mv}}}(\mv)
\end{pmatrix}.
\end{equation*}
Hence, if $\gamma$ is big enough,
\begin{equation}\label{eq:Mgammahullamnorm}
\left\|\widetilde{N}_{\gamma}\right\|_{\max}=\left\|\widetilde{M}_{\gamma}\right\|_{\max}=\gamma\cdot\max_{\substack{\mv\in\mV\\ \me\in\mE_{\mv}}}\{c_{\me}(\mv)\}\eqqcolon \gamma\cdot\widetilde{K}_c.
\end{equation}
Equations \eqref{eq:Mgammainvnorm} and \eqref{eq:Mgammahullamnorm} imply that
we have
\[\left\|N_{\gamma}^{-1}\cdot \widetilde{N}_{\gamma}\cdot F_{\gamma}\right\|_{\max}\leq \e^{-\gamma \cdot l}\cdot \gamma \cdot K_c\cdot \widetilde{K}_c<1,\text{ if }\gamma\text{ is big enough,}\]
where $l$ is dfined as
\begin{equation}
l\coloneqq\min_{\me\in\mE}\ell_{\me}>0
\end{equation}
Thus for $\gamma$ big enough \eqref{eq:Bsurjbiz2} holds. This implies that \eqref{eq:BuIm} is true; that is, for arbitrary $z\in\real^{2m}$ there exist $\gamma>0$, $\alpha, \beta\in\real^m$ such that \eqref{eq:Buzblock} is satisfied. Hence, by defining $u$ as in \eqref{eq:uinAmax} with the constants $\gamma$, $\alpha_j$, $\beta_j$, $j=1,\dots ,m$, we obtain
\[u\in D(A_{\max})\text{ and }Bu=z,\]
and the proof is complete.
\end{proof}

\section{Asymptotics of the spectrum of \texorpdfstring{$A$}{A}}

\begin{proposition}\label{prop:szigmaA}
Let $(\lambda_k)_{k\in\nat}$ be the sequence of eigenvalues of the generator $(A,D(A))$ from Remark \ref{rem:Aeigen}. Then for any $\lambda>0$ there exist constants $l_1,l_2>0$ such that
\begin{equation}\label{eq:lambdakaszimptproof}
l_1\cdot k^2\leq \lambda-\lambda_k\leq l_2\cdot k^2,\quad k\in\nat. 
\end{equation}
\end{proposition}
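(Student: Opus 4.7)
The strategy is Dirichlet--Neumann bracketing via the min--max principle. Put $\mu_k := -\lambda_k \ge 0$; by Remark \ref{rem:Aeigen} and Proposition \ref{prop:eaASttul}, these are the (ordered) eigenvalues of the non-negative self-adjoint operator $-A$ associated with the closed form $\ea$, and it suffices to show $c_1 k^2 \le \mu_k \le c_2 k^2$ for some $0 < c_1 \le c_2$ and all $k$ large enough; the estimate for small $k$ and the constant $\lambda$ can then be absorbed by adjusting $l_1, l_2$.

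I would introduce two auxiliary quadratic forms that carry the same differential expression as $\ea$ but decouple over the edges:
\begin{align}
\ea_D(u,u) &= \sum_{\me\in\mE}\int_0^{\ell_\me} c_\me (u_\me')^2 + p_\me u_\me^2\, dx, & D(\ea_D) &= \prod_{\me\in\mE} H_0^1(0,\ell_\me),\\
\ea_N(u,u) &= \sum_{\me\in\mE}\int_0^{\ell_\me} c_\me (u_\me')^2 + p_\me u_\me^2\, dx, & D(\ea_N) &= \mcH^1.
\end{align}
Since the continuity conditions $I_{\mv}U(\mv)=0$ encoded in $D(\ea)$ are implied by vanishing at all endpoints but imply nothing beyond $\mcH^1$-regularity, one has $D(\ea_D)\subset D(\ea)\subset D(\ea_N)$ with equality of the forms on the respective inclusions. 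The min--max principle (e.g.\ \cite[Chap.\ XIII]{EN00}-style arguments) then yields the bracket
\begin{equation}
\mu_k(\ea_N) \le \mu_k \le \mu_k(\ea_D), \qquad k\in\nat.
\end{equation}

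Both $\ea_N$ and $\ea_D$ are orthogonal sums of Sturm--Liouville forms on single intervals. Using $0<c_0 \le c_\me \le \|c_\me\|_\infty$ and $0\le p_\me \le \|p_\me\|_\infty$ together with min--max applied to the free Laplacian eigenvalues $\frac{\pi^2 j^2}{\ell_\me^2}$ (Dirichlet) and $\frac{\pi^2(j-1)^2}{\ell_\me^2}$ (Neumann), there exist positive constants $a_\me, b_\me$ and $M$ such that the $j$-th eigenvalues on edge $\me$ obey
\begin{equation}
a_\me (j-1)^2 - M \le \nu_j^{N,\me} \le \nu_j^{D,\me} \le b_\me j^2 + M,\qquad j\ge 1.
\end{equation}
The spectra of $\ea_N$ and $\ea_D$ are the sorted merges of these $m$ sequences. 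A direct counting gives, with $L=\sum_\me \ell_\me$, the Weyl-type bounds
\begin{equation}
\#\{j: \nu_j^{D,\me}\le t\} \le C_1 \sqrt{t}+C_2 \text{ and } \#\{j: \nu_j^{N,\me}\le t\}\ge C_3 \sqrt{t}-C_4,
\end{equation}
which, upon summing over $\me\in\mE$ and inverting the counting function, translate into $\mu_k(\ea_D)\le \tilde c_2 k^2 + \tilde c_2'$ and $\mu_k(\ea_N)\ge \tilde c_1 k^2 - \tilde c_1'$. Inserting into the bracket and absorbing the additive constants and the shift by $\lambda$ into $l_1,l_2$ delivers \eqref{eq:lambdakaszimptproof}.

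The only delicate step I foresee is the lower bound $\mu_k(\ea_N)\ge \tilde c_1 k^2-\tilde c_1'$: it requires a careful counting (or equivalently a Weyl asymptotic) for the Neumann Sturm--Liouville problem with variable coefficients, to ensure that the merged and re-sorted spectrum does not grow slower than quadratically. Everything else is bookkeeping with min--max.
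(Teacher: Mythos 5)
Your argument is correct in substance but follows a genuinely different route from the paper. The paper fixes the coercive form $\ea_\la$ and imposes Dirichlet conditions vertex by vertex: each step restricts the form domain by co-dimension one, so the interlacing theorem \cite[Thm.~3.1.8]{BeKu} applies, and after $n$ steps one reaches the fully decoupled all-Dirichlet problem with eigenvalues $\asymp k^2$; the accumulated index shift is at most $n$, which is absorbed into the constants, so a \emph{single} reference problem yields both the upper and the lower bound. You instead use two-sided Dirichlet--Neumann bracketing, $D(\ea_D)\subset D(\ea)\subset D(\ea_N)$, which gives $\mu_k(\ea_N)\le\mu_k\le\mu_k(\ea_D)$ with no index shift but at the price of analysing a second (decoupled Neumann) reference problem. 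Both are valid; your version is the more textbook-standard bracketing argument and avoids the quantum-graph-specific interlacing lemma, while the paper's version needs only the Dirichlet comparison problem and quotes the interlacing result as a black box. The decoupled eigenvalue asymptotics you need (comparison of the variable-coefficient Sturm--Liouville forms with the free Laplacian via $c_0\le c_{\me}\le\|c_{\me}\|_\infty$, $0\le p_{\me}\le\|p_{\me}\|_\infty$, then merging $m$ quadratically growing sequences) are exactly what the paper invokes from \cite[Probl.~6.1]{LaTh}.

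One small slip: the directions of your displayed counting-function inequalities are swapped relative to what you need. An upper bound $\mu_k(\ea_D)\le \tilde c_2 k^2+\tilde c_2'$ requires a \emph{lower} bound on the Dirichlet counting function $\#\{j:\nu_j^{D,\me}\le t\}$, and a lower bound $\mu_k(\ea_N)\ge \tilde c_1 k^2-\tilde c_1'$ requires an \emph{upper} bound on the Neumann counting function. This does not affect the validity of the proof, because the two-sided eigenvalue estimates $a_{\me}(j-1)^2-M\le\nu_j^{N,\me}$ and $\nu_j^{D,\me}\le b_{\me}j^2+M$ in your preceding display are the ones actually needed, and the correct counting bounds follow from them directly; just fix the inequality directions when writing this up.
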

\begin{proof}
Let us fix $\la>0$ and recall that $\lambda-A$ is the operator associated with the coercive, symmetric, continuous form $\ea_{\la}$ defined in \eqref{eq:eaomega} by
\begin{equation}
\begin{split}
\ea_{\la}(f,g)&=\ea(f,g)+\la\cdot\langle f,g\rangle_{\mcH},\\
D(\ea_{\la})&=\left\{u\in \mcH^1\colon I_{\mv}U(\mv)=0,\; \mv\in\mV \right\}.
\end{split}
\end{equation}
Let $\mv_1\in\mV$ arbitrary, and define the form $\left(\ea^{1},D(\ea^{1})\right)$ acting in the same way as $\ea_{\la}$ but having Dirichlet condition in the vertex $\mv_1$ in its domain. That is,
\begin{equation}
\begin{split}
\ea^{1}(f,g)&=\ea_{\la}(f,g),\\
D(\ea^{1})&=\left\{u\in \mcH^1 \colon I_{\mv}U(\mv)=0,\;\mv\neq \mv_1,\; U(\mv_1)=0_{\real^{d_{\mv_1}}}\; \right\},
\end{split}
\end{equation}
cf.~\eqref{eq:Fv}. Denote by $(A_1,D(A_1))$ the operator associated with $\left(\ea^{1},D(\ea^{1})\right)$. We can now carry out the proof of \cite[Thm.~3.1.8]{BeKu} applied to the forms $\ea_{\la}$ and $\ea^1$. We only have to use the facts that $D(\ea^{1})$ is a subspace of co-dimension $1$ of $D(\ea_{\la})$, $\ea_{\la}$ and $\ea^{1}$ agree on $D(\ea^{1})$, and the ''min-max principle'' for coercive, symmetric forms holds, see e.g.~\cite[Thm.~6.5]{LaTh}. Thus we obtain that denoting by $(\lambda^1_k)_{k\in\nat}$ the eigenvalues of $(A_1,D(A_1))$ labelled in non-increasing order,
\begin{equation}\label{eq:}
\la-\lambda_{k+1}\leq \lambda^1_k\leq \la-\lambda_{k},\quad k\in\nat
\end{equation}
holds. Continuing this process by defining the finite sequence of symmetric and accretive forms $\ea^i$, $i=1,\dots ,n$ for an ordering of the vertices $\mV=\{\mv_1,\dots ,\mv_{n}\}$ such that
\begin{equation}
\begin{split}
\ea^{i+1}(f,g)&=\ea^{i}(f,g),\\
D(\ea^{i+1})&=\left\{u\in D(\ea^i) \colon U(\mv_{i+1})=0\right\},
\end{split}
\end{equation}
we can apply the proof of \cite[Thm.~3.1.8]{BeKu} for each pair of forms $\ea^i$ and $\ea^{i+1}$. Hence, if we denote by $(\lambda^{i}_k)_{k\in\nat}$ the eigenvalues of $(A_{i},D(A_{i}))$, the operator associated to $(\ea^i,D(\ea^i))$, labelled in non-increasing order, we obtain that
\begin{equation}\label{eq:lambdaik}
\lambda^{i}_{k+1}\leq \lambda^{i+1}_k\leq \lambda^{i}_{k},\quad k\in\nat,\quad i=1,\dots n-1.
\end{equation}
Clearly,
\begin{equation}
\begin{split}
\ea^{n}(f,g)&=\ea_{\la}(f,g),\\
D(\ea^{n})&=\prod_{\me\in\mE}H_0^1(0,\ell_{\me}).
\end{split}
\end{equation}
It is straightforward that the operator $(A_{n},D(A_{n}))$ associated with $\ea^n$ is the operator acting as $\la-A$ with Dirichlet conditions in all vertices. Using that $c_{\me}>0$, $p_{\me}\geq 0$, $\me\in\mE$, and the min-max princible holds, we have that for the set $\left(\lambda^n_k\right)_{k\in\nat}$ of eigenvalues of $A_{n}$ there exist constants $l_1,l_2>0$ such that
\begin{equation}
l_1\cdot k^2\leq \lambda^n_k\leq l_2\cdot k^2,\quad k\in\nat,
\end{equation}
see also \cite[Probl.~6.1]{LaTh}. Thus, by \eqref{eq:lambdaik} also \eqref{eq:lambdakaszimptproof} holds which finishes the proof.
\end{proof}
\textbf{Acknowledgements.} M.~Kovács acknowledges the support of the Marsden Fund of the Royal Society of New
Zealand through grant no.~18-UOO-143, the Swedish Research Council (VR) through grant
no.~2017-04274 and the National Research, Development, and Innovation Fund of Hungary under Grant no. TKP2021-NVA-02 and Grant no. K-131545.

E.~Sikolya was supported by the OTKA grant no.~135241.

The authors would like to thank the anonymous referee for the careful reading of the manuscript and for the useful comments that
helped them to improve the presentation and the results of the paper significantly.

\providecommand{\bysame}{\leavevmode\hbox to3em{\hrulefill}\thinspace}
\providecommand{\MR}{\relax\ifhmode\unskip\space\fi MR }
\providecommand{\MRhref}[2]{%
  \href{http://www.ams.org/mathscinet-getitem?mr=#1}{#2}
}
\providecommand{\href}[2]{#2}


\begin{thebibliography}{10}

\bibitem{Ar04}
Wolfgang Arendt, \emph{Semigroups and evolution equations: functional calculus,
  regularity and kernel estimates}, Evolutionary equations. {V}ol. {I}, Handb.
  Differ. Equ., North-Holland, Amsterdam, 2004, pp.~1--85. \MR{2103696}

\bibitem{BeLo}
J\"{o}ran Bergh and J\"{o}rgen L\"{o}fstr\"{o}m, \emph{Interpolation spaces.
  {A}n introduction}, Grundlehren der Mathematischen Wissenschaften, No. 223,
  Springer-Verlag, Berlin-New York, 1976. \MR{0482275}

\bibitem{BeKu}
Gregory Berkolaiko and Peter Kuchment, \emph{Introduction to quantum graphs},
  Mathematical Surveys and Monographs, vol. 186, American Mathematical Society,
  Providence, RI, 2013. \MR{3013208}
	
\bibitem{BM10}
Stefano Bonaccorsi and Delio Mugnolo, \emph{Existence of strong solutions for neuronal network dynamics driven by
  fractional {B}rownian motions}, Stoch. Dyn., \textbf{10} (2010), no. 3, 441--464.

\bibitem{BZ06}
S.~Bonaccorsi and G.~Ziglio, \emph{A semigroup approach to stochastic dynamical
  boundary value problems}, Systems, control, modeling and optimization, IFIP
  Int. Fed. Inf. Process., vol. 202, Springer, New York, 2006, pp.~55--65.
  \MR{2241696}

\bibitem{BMZ08}
Stefano Bonaccorsi, Carlo Marinelli, and Giacomo Ziglio, \emph{Stochastic
  {F}itz{H}ugh-{N}agumo equations on networks with impulsive noise}, Electron.
  J. Probab. \textbf{13} (2008), no. 49, 1362--1379. \MR{2438810}

\bibitem{CaPi07}
Robert Carlson and Vyacheslav Pivovarchik, \emph{Ambarzumian's theorem for
  trees}, Electron. J. Differential Equations (2007), No. 142, 9. \MR{2349970}

\bibitem{DP04}
Giuseppe Da~Prato, \emph{Kolmogorov equations for stochastic {PDE}s}, Advanced
  Courses in Mathematics. CRM Barcelona, Birkh\"{a}user Verlag, Basel, 2004.
  \MR{2111320}

\bibitem{DPZ93}
Giuseppe Da~Prato and Jerzy Zabczyk, \emph{Evolution equations with white-noise
  boundary conditions}, Stochastics Stochastics Rep. \textbf{42} (1993),
  no.~3-4, 167--182. \MR{1291187}

\bibitem{DPZbook}
\bysame, \emph{Stochastic equations in infinite dimensions}, second ed.,
  Encyclopedia of Mathematics and its Applications, vol. 152, Cambridge
  University Press, Cambridge, 2014. \MR{3236753}

\bibitem{EK19}
Klaus-Jochen Engel and Marjeta Kramar~Fijav\v{z}, \emph{Waves and diffusion on
  metric graphs with general vertex conditions}, Evol. Equ. Control Theory
  \textbf{8} (2019), no.~3, 633--661. \MR{3985968}

\bibitem{EN00}
Klaus-Jochen Engel and Rainer Nagel, \emph{One-parameter semigroups for linear
  evolution equations}, Graduate Texts in Mathematics, vol. 194,
  Springer-Verlag, New York, 2000, With contributions by S. Brendle, M.
  Campiti, T. Hahn, G. Metafune, G. Nickel, D. Pallara, C. Perazzoli, A.
  Rhandi, S. Romanelli and R. Schnaubelt. \MR{1721989}

\bibitem{Gr87}
G\"{u}nther Greiner, \emph{Perturbing the boundary conditions of a generator},
  Houston J. Math. \textbf{13} (1987), no.~2, 213--229. \MR{904952}

\bibitem{Haase06}
Markus Haase, \emph{The functional calculus for sectorial operators}, Operator
  Theory: Advances and Applications, vol. 169, Birkh\"{a}user Verlag, Basel,
  2006. \MR{2244037}

\bibitem{HauSei01}
Erika Hausenblas and Jan Seidler, \emph{A note on maximal inequality for
  stochastic convolutions}, Czechoslovak Math. J. \textbf{51(126)} (2001),
  no.~4, 785--790. \MR{1864042}

\bibitem{Ki20}
M\'{a}rton Kiss, \emph{Spectral determinants and an {A}mbarzumian type theorem
  on graphs}, Integral Equations Operator Theory \textbf{92} (2020), no.~3,
  Paper No. 24, 11. \MR{4109188}

\bibitem{KS21}
Mih\'{a}ly Kov\'{a}cs and Eszter Sikolya, \emph{On the stochastic
  {A}llen-{C}ahn equation on networks with multiplicative noise}, Electron. J.
  Qual. Theory Differ. Equ. (2021), Paper No. 7, 1--24. \MR{4204948}

\bibitem{KS21JEE}
\bysame, \emph{Stochastic reaction-diffusion equations on networks}, J. Evol.
  Equ. \textbf{21} (2021), no.~4, 4213--4260. \MR{4350573}

\bibitem{KrPu20}
Marjeta Kramar~Fijav\v{z} and Aleksandra Puchalska, \emph{Semigroups for
  dynamical processes on metric graphs}, Philos. Trans. Roy. Soc. A
  \textbf{378} (2020), no.~2185, 20190619, 16. \MR{4176394}

\bibitem{LaTh}
Stig Larsson and Vidar Thom\'{e}e, \emph{Partial differential equations with
  numerical methods}, Texts in Applied Mathematics, vol.~45, Springer-Verlag,
  Berlin, 2003. \MR{1995838}

\bibitem{Mu14}
Delio Mugnolo, \emph{Semigroup methods for evolution equations on networks},
  Understanding Complex Systems, Springer, Cham, 2014. \MR{3243602}

\bibitem{MR07}
Delio Mugnolo and Silvia Romanelli, \emph{Dynamic and generalized {W}entzell
  node conditions for network equations}, Math. Methods Appl. Sci. \textbf{30}
  (2007), no.~6, 681--706. \MR{2301840}

\bibitem{PeZa}
Szymon Peszat and Jerzy Zabczyk, \emph{Stochastic partial differential
  equations with {L}\'{e}vy noise}, Encyclopedia of Mathematics and its
  Applications, vol. 113, Cambridge University Press, Cambridge, 2007.
  \MR{2356959}

\bibitem{Triebel78}
Hans Triebel, \emph{Interpolation theory, function spaces, differential
  operators}, North-Holland Mathematical Library, vol.~18, North-Holland
  Publishing Co., Amsterdam-New York, 1978. \MR{503903}

\bibitem{vNVW08}
J.~M. A.~M. van Neerven, M.~C. Veraar, and L.~Weis, \emph{Stochastic evolution
  equations in {UMD} {B}anach spaces}, J. Funct. Anal. \textbf{255} (2008),
  no.~4, 940--993. \MR{2433958}

\end{thebibliography}
\end{document}